\author{Patrik V. Nabelek}
\address{Department of Mathematics, Kidder Hall 368, Oregon State University, Corvallis, OR 97331-4605}
\email{nabelekp@oregonstate.edu}
\author{Vladimir E. Zakharov}
\address{Department of Mathematics, University of Arizona, Tucson, AZ 85721 \\ L.~D.~Landau Institute for Theoretical Physics, Chernogolovka, Moscow Region, Russian Federation }
\email{zakharov@math.arizona.edu}
\date{\today}
\title[Kaup--Broer System]{Solutions to the Kaup--Broer System and Its 2+1 Dimensional Integrable Generalization via the Dressing Method}
\newtheorem{thm}{Theorem}
\newtheorem{cor}[thm]{Corollary}
\newtheorem{rmk}[thm]{Remark}
\newtheorem{prop}[thm]{Proposition}
\newtheorem{rhp}[thm]{Riemann--Hilbert Problem}
\begin{document}

\maketitle

\begin{abstract}

In this paper we formulate the nonlocal dbar problem dressing method of Manakov and Zakharov \cite{ZM84,ZM85,Z89} for the 4 scaling classes of the 1+1 dimensional Kaup--Broer system \cite{B75,K75}.
The method for the 1+1 dimensional Kaup--Broer systems are reductions of a method for a complex valued 2+1 dimensional completely integrable partial differential equation first introduced in \cite{RP11}.
This method allows computation of solutions to all cases of the Kaup--Broer system.
We then consider the case of non-capillary waves with usual gravitational forcing, and use the dressing method to compute N-soliton solutions and more general solutions in the closure of the N-soliton solutions in the topology of uniform convergence in compact sets called primitive solutions.
These more general solutions are an analogue of the solutions derived in \cite{DZZ16,ZDZ16,ZZD16} for the KdV equation.
We derive dressing functions for finite gap solutions.
We compute counter propagating dispersive shockwave type solutions numerically.

\end{abstract}

\section{Introduction}

It was recently demonstrated experimentally by Redor et. al. that a soliton gas consisting of many solitons propagating in both directions can be formed in a flume \cite{RNMOM19}.
Moreover, they demonstrated that the counter propagating Kaup--Broer (Kaup--Boussinesq) 2-soliton soliton collision compares favorably to laboratory data provided that we ignore non-integrable effects which are of small amplitude relative to the solitons in small amplitude regime \cite{RNMOM19}.
Because of the finite size of the flume, experimental observations were necessarily of counter propagating solitons to achieve the length of propagation needed to produce the soliton gas \cite{RNMOM19}.
In this paper we consider the problem of computing solitons to the Kaup--Broer system consisting of many interacting solitons. 
This is done by considering the primitive solutions, which were first introduced for the KdV equation \cite{DZZ16,ZDZ16,ZZD16,NZZ19}.
We also discuss how the method appears as a reduction of the dressing method for the complete complexification of the 2+1D completely integrable generalization of the Kaup--Broer system originally considered by Rogers and Pashaev \cite{RP11}.

It should be noted that the Kaup--Broer system has been shown to be ill-posed in Sobelev space \cite{ABM17}.
However, this does not mean there can be no practical application of particular solutions to the Kaup--Broer system in the small amplitude and long wave regime.
This does mean trying to apply conventional numerical or perturbation theory methods to the Kaup--Broer system is not a good idea.
However, given that the collision of two counter propagating solitons modeled by the Kaup--Broer system compares favorably to experiment \cite{RNMOM19}, a soliton gas described by the Kaup--Broer system could be interesting from a practical point of view.
Moreover, as mentioned in \cite{RNMOM19}, the observations of Chen and Yeh \cite{CY14} that show that the head on collision of solitons produces an amplification over a simple linear superposition of KdV solitons means we can expect a counter propagating soliton gas to have larger amplitude statistics than a sum of counter propagating KdV soliton gases.
The splitting of the scattering theory into left and right moving components makes the Kaup--Broer system an intuitive model for describing a soliton gas consisting of counter propagating solitons.

In 1975 Kaup \cite{K75} and Broer \cite{B75} independently derived the following system of nonlinear partial differential equations
\begin{align} \label{eq:KaupSystem1dim}
& \eta_t + h_0 \varphi_{xx} + (\eta \varphi_x)_x + \left(\frac{h_0^3}{3} - \frac{h_0 \tau}{\rho g} \right) \varphi_{xxxx}  = 0, \\
& \varphi_t + \frac{1}{2} (\varphi_x)^2 + g \eta = 0,
\label{eq:KaupSystem2dim}
\end{align}
which we have expressed in dimensional form.
We will call this system of equations the Kaup--Broer system.
This system describes weakly nonlinear long shallow water waves in a channel of constant depth.
In system (\ref{eq:KaupSystem1dim},\ref{eq:KaupSystem2dim}) the constant $h_0$ is the quiescent water depth, $g$ is the gravitational acceleration, $\tau$ is the surface tension of the fluid, and $\rho$ is the density of the fluid. 
The variable $\eta$ is the free surface displacement from quiescent water depth, and $\varphi$ is the velocity potential evaluated on the free surface.

Broer arrived on the system (\ref{eq:KaupSystem1dim},\ref{eq:KaupSystem2dim}) by considering Hamiltonian approximations to the Hamiltonian equations for water--waves in a narrow channel.
Kaup showed that system (\ref{eq:KaupSystem1dim},\ref{eq:KaupSystem2dim}) when $g>0$ and $\tfrac{h_0^2}{3} > \tfrac{\tau}{g \rho}$ can be solved by the Inverse Scattering Method (ISM);
he studied the direct and inverse scattering problem, and found one-solition solutions.
Periodic and quasiperiodic finite gap solutions and their N-soliton limit were found by Matveev and Yavor \cite{MY79}.
The N-soliton solution have also been computed using the Hirota bilinear form \cite{W13}.
Complete integrability of the Kaup--Broer system was proven by Kupershmidt \cite{K86}.

Consider the general form of the Kaup--Broer system
\begin{equation} \label{eq:KaupSystemg1}
\eta_t + \mu_1 \varphi_{xx} + \mu_2(\eta \varphi_x)_x + \mu_3 \varphi_{xxxx} = 0,
\end{equation}
\begin{equation} \label{eq:KaupSystemg2}
\varphi_t + \varepsilon_1 \frac{1}{2} (\varphi_x)^2 + \varepsilon_2 \eta = 0.
\end{equation}
The system (\ref{eq:KaupSystemg1}, \ref{eq:KaupSystemg2}) has 4 distinct scaling classes that can be represented by systems of the form
\begin{equation} \label{eq:KaupSystem1}
\eta_t + \varphi_{xx} + (\eta \varphi_x)_x + \mu \varphi_{xxxx} = 0,
\end{equation}
\begin{equation} \label{eq:KaupSystem2}
\varphi_t + \frac{1}{2} (\varphi_x)^2 + \varepsilon \eta = 0,
\end{equation}
determined by the 4 possible choices defined by $\varepsilon = \pm 1$ and $\mu = \pm \tfrac{1}{4}$.
The 4 scaling class of the Kaup--Broer system are analogous to the 4 scaling classes of 1D Boussinesq equations discussed by Bogdanov and Zakharov in \cite{BZ02}.
The 4 scaling classes of the Kaup--Broer system have the following physical interpretations:
\begin{enumerate}
\item $\varepsilon = 1$, $\mu = \frac{1}{4}$ corresponds to gravitational non-capillary waves.

\item $\varepsilon = 1$, $\mu = -\frac{1}{4}$ corresponds to gravitational capillary waves.

\item $\varepsilon = -1$, $\mu = \frac{1}{4}$ corresponds to non-capillary waves in reversed gravity.

\item $\varepsilon = -1$, $\mu = -\frac{1}{4}$ corresponds to capillary waves in reversed gravity.
\end{enumerate}

It is known (see, for instance \cite{ZM84}) that any integrable system in 1+1 dimensions admits a generalization to an equation in 2+1 dimensions, preserving integrability.
For instance, the KdV equation
\begin{equation} \label{eq:KdV}
u_t + 6 u u_x + u_{xxx} = 0
\end{equation}
admits generalizations to the KP equation
\begin{equation} \label{eq:KP}
\frac{\partial}{\partial x} (u_t + 6 u u_x + u_{xxx}) = \pm 3 u_{yy},
\end{equation}  
and the Veselov--Novikov equation
\begin{align}
& u_t = u_{zzz} + u_{\bar z \bar z \bar z} + (uv)_z + (u \bar v)_{\bar z} \\
& v_{\bar z} = -3u_z, u = \bar u. 
\end{align}
A completely integrable 2+1 dimensional generalization of the Kaup--Broer system was discovered by Rogers and Pashaev using the Hirota bilinear form \cite{RP11}.

All four scaling classes of the Kaup--Broer system are dimensional reductions of the completely complexified version of the system
\begin{align}
\label{eq:ssys} &s_t + \alpha s_u^2-\beta s_{v}^2 + \Pi = 0 \\
\label{eq:csys} & c_t+2\alpha (s_u c)_u - 2\beta(s_v c)_v  - 2 a P s + \frac{1}{2} P s_{uv}
\end{align}
with
\begin{equation} \label{eq:Pdef} P = \alpha \frac{\partial^2}{\partial u^2} - \beta \frac{\partial^2}{\partial v^2} \end{equation}
and $\Pi$ is defined in terms of $c$ by solving
\begin{equation} \label{eq:Pisys} \Pi_{uv} = 2Pc. \end{equation}
By complete complexification, we mean all coordinates, fields, and coefficients are taken to possibly be complex numbers. 
The $\Pi$ dependence in (\ref{eq:ssys}) can be removed by differentiating (\ref{eq:ssys}) by $u$ and $v$ and applying (\ref{eq:Pisys}). 
This completely integrable generalization of the Kaup--Broer system was first introduced by Rogers and Pashev \cite{RP11} (for a particular choice of parameters).
The system (\ref{eq:ssys},\ref{eq:csys}) is Hamiltonian in the sense that it is the the canonical system
\begin{equation}
s_t = -\frac{\delta H}{\delta c}, \quad c_t = \frac{\delta H}{\delta s}
\end{equation}
for the Hamiltonian functional
\begin{align}
 H & = \iint \alpha cs_u^2 - \beta c s_v^2 + asPs -\frac{1}{4} s Ps_{uv} + \frac{1}{2} c \Pi \: du dv.
\end{align}

We can see that all 4 scaling classes of the Kaup--Broer system are achievable as the dimensional reduction of the complete complexification of integrable 2+1 dimensional generalization of the Kaup--Broer system.
Consider the choice parameters $\alpha = -\beta = \frac{1}{4}$ and $a = \pm 1$, suppose there exists a solution $s,c$ to (\ref{eq:ssys},\ref{eq:csys}) such that 
\begin{equation} \label{eq:reduction} \sigma s(u,v, \sigma t) = \varphi(u+v,t), \quad -ac(u,v,\sigma t) = \eta(u+v,t). \end{equation}
If we write $x = u+v$, then $P$ acts on the 1+1 dimensional fields as $\frac{1}{2}\frac{\partial^2}{\partial x^2}$, and the equation for $\Pi$ becomes $\Pi_{xx} = -a\eta_{xx}$.
It is thus easy to see that $\varphi(x,t)$ and $\eta(x,t)$ solve
\begin{align}
& \varphi_t + \frac{1}{2} \varphi_x^2 - a \sigma^2 \eta = 0 \\
& \eta_t + (\varphi_x \eta)_x  + \varphi_{xx} - \frac{a}{4} \varphi_{xxxx} = 0.
\end{align}
We see that all 4 scaling classes are attainable from the complete complexification based on the two choices $a = \pm 1$ and $\sigma = 1,i$.
The 2+1 dimensional Hamiltonian then reduces to
\begin{align}
 H & = \int \frac{1}{2} \eta \varphi_x^2 - \frac{1}{2} \varphi \varphi_{xx} -\frac{a}{4} \varphi \varphi_{xxxx} - \frac{a \sigma^2}{2} \eta^2 \: dx.
\end{align}

\begin{rmk}
The fact that we can solve the complete complexification allows us to produce solutions to other interesting reductions of the completely complexified equations.
A particularly interesting reduction from the complete complexificaiton corresponds to taking  $u = z = x+iy$ and $v = \bar z = x-iy$ with $\alpha = \beta = 1$ so that
\begin{equation} \frac{\partial^2}{\partial u \partial v} = \frac{1}{4} \left(\frac{\partial^2}{\partial x^2} + \frac{\partial^2}{\partial y^2}\right) = \frac{1}{4} \Delta, \end{equation}
\begin{equation} P = \frac{1}{2} \left( \frac{\partial^2}{\partial x^2} - \frac{\partial^2}{\partial y^2} \right) = \frac{1}{2} \Box, \end{equation}
and the 2+1 dimensional equation becomes
\begin{align} & s_t - \frac{i}{2} s_x s_y + \Pi = 0, \\
& c_t -i (s_xc)_y - i (s_yc)_x - a\Box s + \frac{1}{16} \Delta \Box s = 0, \\
& \Delta \Pi = 4 \Box c.  \end{align}
The linearization of this system can be expressed in terms of $c$ and $\Pi$ as
\begin{align}
& \label{eq:linearizationcrmk} c_{tt} + a \Box \Pi-\frac{1}{16} \Delta \Box \Pi = 0, \\
& \label{eq:linearizationPirmk} \Delta \Pi = 4 \Box c.
\end{align}
If we take the Fourier transform
\begin{equation} c_{pq} = \frac{1}{2\pi} \iint c(x,y) e^{-i(px+qy)} \: dx dy, \end{equation}
then equation (\ref{eq:linearizationcrmk}) becomes
\begin{equation} (c_{pq})_{tt} + a (p^2-q^2)\Pi_{pq} - \frac{1}{16} (p^2+q^2)(p^2-q^2) \Pi_{pq} = 0.  \end{equation}
Moreover, (\ref{eq:linearizationPirmk}) implies
\begin{equation} \Pi_{pq} = 4 \frac{p^2-q^2}{p^2+q^2} c_{pq}. \end{equation}
so we can remove $\Pi_{pq}$ from the equation for $c_{pq}$ to get
\begin{equation} (c_{pq})_{tt} + 4 a \frac{(p^2-q^2)^2}{p^2+q^2} c_{pq} - \frac{1}{4} (p^2-q^2)^2 c_{pq} = 0. \end{equation}
Therefore, this version of the 2+1 dimensional system is a nonlinear wave equation with cubic nonlinearity, and dispersion relation
\begin{equation} \omega^2 = \frac{1}{4} (p^2-q^2)^2 - 4 a \frac{(p^2-q^2)^2}{p^2+q^2}. \end{equation}
When $a<0$ this is a stable wave equation in the long wave limit, while for $a > 0$ this is stable only in the long wave limit with $p >> q$ or $q >> p$.
These could potentially be physically relevant wave equations.
\end{rmk}

In this paper we will construct solutions to system (\ref{eq:KaupSystem1},\ref{eq:KaupSystem2}) and its generalization (\ref{eq:ssys},\ref{eq:csys}) by the dressing method developed in papers of Zakharov and his collaborators \cite{Z89,ZM84,ZM85,BM88}.
The dressing method makes it possible to find much broader classes of solutions of integrable systems than the ISM, and is based on the use of the nonlocal $\bar{\partial}$-problem first used by Ablowits, Fokas, and Bar-Yaacov \cite{ABF83} for the solution of the KP-2 equation (the KP-2 is equation (\ref{eq:KP}) with the positive choice of sign).
Moreover, since $u,v,t$ and $\alpha,\beta,a$ will appear as parameters in the Kaup--Broer system, this method leads immediately to solutions of the completely complexified system as well.

One important property of the 2+1 dimensional generalization of the Kaup--Broer system is that the nonlocal $\bar \partial$ problem for the 4 cases of the Kaup--Broer system are dimensional reductions of the nonlocal $\bar \partial$ problem for the 2+1 dimensional generalization.
The operator for the scattering problem for our 2+1 dimensional generalization of the Kaup--Broer system is the non-relativistic quantum Hamiltonian for a charged particle under the influence of a magnetic field.

In the case of $\varepsilon = 1$ and $\mu = \tfrac{1}{4}$, we will explicitly construct the N-soliton solutions, and also construct solutions that can be interpreted as a limit of the N-soliton solutions as the number of solitons diverges to $\infty$.
We will provide numerical evidence that the second type of solution can describe counter propagating dispersive shockwave type solutions.

The problem of computing solutions to completely integrable partial differential equations that are limits of sequences N-soliton solutions as the number of solitons diverges to infinity is an intriguing area of research with many interesting unsolved problems.
Recent progress on this problem was made by Dyachenko, Zakharov and Zakharov who demonstrated in \cite{DZZ16,ZDZ16,ZZD16} that periodic, dispersive shockwave, and turbulent solutions to the KdV equation can be computed as limits of N-soliton solutions.
From the point of view of the inverse spectral theory of 1D Schr\"{o}dinger operators, a continuum limit was taken in which discrete eigenvalue coalesce into spectral bands.
These potentials were dubbed primitive potentials.
This work was continued in \cite{NZZ19} where the case of symmetric primitive potentials was considered in detail.
An analytic procedure for computing all the Taylor coefficients of symmetric primitive potentials was presented, as well as the special case of an elliptic potential.
The asymptotic behavior of genus one dispersive shockwave behavior of a soliton was analyzed rigorously by Girotti, Grave and McLaughlin \cite{GGM18} using the Riemann--Hilbert problem formulation of the infinite soliton limit presented by Dyachenko, Zakharov and Zakharov \cite{DZZ16,ZDZ16,ZZD16}.
It was shown how to produce all finite gap solutions using primitive solution in \cite{N19}.

 One draw back of using the KdV equation as a model of 1+1D (one spatial and one temporal dimension) shallow water long waves is that the KdV equation is derived by assuming wave motion in a single direction.
A common completely integrable alternative to the KdV equation is the 1+1 dimensional Boussinesq equation.
However, one issue is that the scattering problem used to solve the Boussinesq equation by the inverse scattering transform (IST) is of degree three making application of nonlinear steepest descent more difficult \cite{BZ02}.
However, the Kaup--Broer system which has a degree 2 scattering problem that separates into left and right moving KdV scattering problems.

A more in depth comparison of the dressing method for the Kaup--Broer system to the dressing method for the Korteweg--de Vreis and the Kadomtsev--Petviashvili equations can be found in chapter 2 of the first author's PhD dissertation \cite{N18} done under the supervision of the second author.
However, in the PhD dissertation only N-soliton solutions were computed by the dressing method.
In this paper we extend the results first presented in the dissertation to compute solutions to the Kaup--Broer system in the closure of the N-soliton solutions with respect to the topology of uniform convergence in compact sets.

\subsection{Outline of Paper}

This paper is structured as follows:
\begin{itemize}
\item In section 2 we provide the Manakov triple formulation of the 2+1 dimensional generalization of the Kaup---Broer system.
\item In section 3 we give the nonlocal $\bar \partial$ formulation of the dressing method for the 2+1 dimensional generalization of Kaup--Broer system, and use it to formally compute a new class of solutions to this generalization in subsection 3.1.
We then discuss how the dimensional reduction of the 2+1 dimensional system to the 1+1 dimensional Kaup--Broer system is achieved from the point of view of the nonlocal $\bar \partial$ problem in subsection 3.2.
\item In section 4 we discuss the most studied case of the 1+1 dimensional Kaup--Borer system ($\epsilon = 1, \mu = \frac{1}{4}$), and show how the dressing method can be used to produce a new class of solutions to the 1+1 dimensional Kaup--Broer system called primitive solutions that can be interpreted as the infinite soliton limit of the N-soliton solutions. 
\item In section 5 we discuss how finite gap solutions to the Kaup--Broer system can be computed 
\item In section 6 we compute numerical approximations (corresponding to exact N-soliton solutions with large N) to some of these new solutions to the 1+1 Kaup--Broer system.
\item In section 7 we provide some concluding remarks.
\end{itemize}

\section{The Manakov Triple for the 2+1 Dimensional Generalization of the Kaup--Broer System}

We will construct the Manakov triple and the corresponding linear system for the 2+1 dimensional generalization of the Kaup--Broer system.
The existence of the Manakov triple is what justifies the use of the dressing method discussed in the next section. 
The system (\ref{eq:ssys},\ref{eq:csys}) is equivalent to a solution of the Manakov triple operator equation
\begin{equation}  \label{eq:Manakov}
L_t = [M,L] + Q L,
\end{equation}
where
\begin{equation} L = \frac{\partial^2}{\partial u \partial v} + A \frac{\partial}{\partial v} + B - a, \end{equation}
\begin{equation} M = \alpha \frac{\partial^2}{\partial u^2} + \beta \frac{\partial^2}{\partial v^2} + F \frac{\partial}{\partial v} + G, \end{equation}
and $Q = F_v - 2 \alpha A_u$.
The equation for the Manakov triple is equivalent to the simultaneous solvability of the linear system
\begin{equation} \label{eq:linsysMan} L \psi = 0, \quad \psi_t = M \psi. \end{equation}
In the next section we will discuss how the nonlocal $\bar \partial$ problem can be used to compute a solution $\psi$ to (\ref{eq:linsysMan}), corresponding potentials, and a solution to the 2+1 dimensional generalization of the Kaup--Broer system.

In terms of $A,B,F$ and $G$, the Manakov triple equation is
\begin{align} A_t \frac{\partial}{\partial v} + B_t = & (2 \beta A_v - F_u) \frac{\partial^2}{\partial u^2} 
 + (2 \alpha B_u - G_v) \frac{\partial}{\partial u} \\ & + (\alpha A_{uu} + \beta A_{vv}  + 2 \beta B_v - F_{uv} + FA_v - F_vA - G_u) \frac{\partial}{\partial v} \\ & + \alpha B_{uu} + \beta B_{vv} + F B_v - G_{uv} - A G_v - a F_v + 2 \alpha a A_u .\end{align}
The coefficients for the constant, $\frac{\partial}{\partial v}$, $\frac{\partial}{\partial u}$, $\frac{\partial^2}{\partial u^2}$ terms solve the nonlinear system
\begin{align} & \label{eq:nonlinA} A_t = \alpha A_{uu} + \beta A_{vv}  +  2 \alpha B_u + 2 \beta B_v - F_{uv} + FA_v - F_vA - G_u \\
& \label{eq:nonlinB} B_t = \alpha B_{uu} + \beta B_{vv} + F B_v - G_{uv} - A G_v - a F_v + 2 \alpha a A_u \\
& \label{eq:linF} F_u = 2 \beta A_v, \\ 
& \label{eq:linG} G_v = 2 \alpha B_u.
 \end{align}
We will now show that this system is equivalent to the complete complexification of the 2+1 dimensional completely integrable generalization of the Kaup--Broer system.
If we suppose the $A$ has the potential form $A = s_u$ for some potential $s$ (we will this potential exists when we discuss the dressing method in the next section) then $F = 2 \beta s_v + C$ for some constant $C$.
Since we will remove the $F$ dependence we can take $C = 0$ and produce a solution to the nonlinear system (\ref{eq:nonlinA}-\ref{eq:linG}).
There also exist some antiderivative $\partial_v^{-1}$ such that  $G = 2 \alpha \partial_v^{-1} B_u$.



Eliminating the $F$ and $G$ dependance in (\ref{eq:nonlinA},\ref{eq:nonlinB}), replacing $A$ with the potential $s$ and anti-differentiating equation (\ref{eq:nonlinA}) with respect to $u$ and $v$ implies
\begin{align}
& \label{eq:ssys0} s_t + \alpha (s_u)^2 - \beta (s_v)^2 - Ps + \rho = 0, \\
& \label{eq:Bsys} B_t + \alpha  2 (s_u B)_u  - \beta 2 (s_v B)_v + PB - a P s   = 0,
\end{align}
where $\rho$ is defined in terms of $B$ as a solution to
\begin{equation}
\label{eq:rhosys} \rho_{uv} = 2 PB, \end{equation}
and we recall that $P$ is the differential operator operator
\begin{equation*} P = \alpha \frac{\partial^2}{\partial u^2} - \beta \frac{\partial^2}{\partial v^2}. \end{equation*}

The $\rho$ dependence in the system can be removed by differentiating (\ref{eq:ssys0}) by $u$ and $v$ and applying (\ref{eq:rhosys}).
We can change variables to fields $c$ and $s$ related by
\begin{equation} c = B - \frac{1}{2} s_{uv} \end{equation}
so that we end up with the system
\begin{align*}
&s_t + \alpha s_u^2-\beta s_{v}^2 + \Pi = 0 \\
& c_t+2\alpha (s_u c)_u - 2\beta(s_v c)_v  - 2 a P s + \frac{1}{2} P s_{uv} = 0
\end{align*}
with $\Pi$ defined in terms of $c$ as the solution to
\begin{equation*} \Pi_{uv} = 2Pc.\end{equation*}
This is the 2+1 dimensional generalization of the Kaup--Broer system.
The function $\Pi$ so defined is related to $s$ and $\rho$ by $\Pi = \rho-Ps$.
This is the system discussed in the introduction.
Moreover, this change of variables is a canonical transformation.

\begin{rmk}
If we transfer to complex coordinates $u=x+iy$ and $v=x-iy$, then operator $L$ is
\begin{equation}
L = \frac{1}{4} \left(\frac{\partial^2}{\partial x^2} + \frac{\partial^2}{\partial y^2}\right) + \frac{A}{2} \left(\frac{\partial}{\partial x} + i \frac{\partial}{\partial y}\right) + B - a,
\end{equation}
which has the physical interpretation as being proportional to the quantum mechanical Hamiltonian for a 2 dimensional non-relativistic charged quantum particle under the influence of an electromagnetic field.
The solutions to the nonlocal $\bar \partial$ problem will allow us to compute solutions to $L\psi=0$.
The inverse spectral theory of periodic and finite gap operators of this form was studied extensively by Novikov and his collaborators, see \cite{D81,N83} for a review of this theory.

\end{rmk}

\section{The Dressing Method for the 1+1 Dimensional Kaup--Broer System and Its 2+1 Dimensional Generalization}

The nonlocal $\bar \partial$ problem can be used to find solutions $\psi$ to the system (\ref{eq:linsysMan}) with coefficients in the linear operators depending on the solution.
Since the coefficients depend on $\psi$, this method computes  a solution to the 2+1 dimensional generalization of the Kaup--Broer system using $\psi$.
We can produce a meromorphic family of solutions $\psi(\lambda;u,v,t)$ to (\ref{eq:linsysMan}) by taking $\psi(\lambda;u,v,t) = e^{\phi(\lambda;u,v,t)} \chi(\lambda;u,v,t)$ where
\begin{equation}
\phi(\lambda;u,v,t) = \lambda u + a \lambda^{-1} v + (\alpha \lambda^2 + \beta \lambda^{-2}) t
\end{equation} 
and $\chi$ solves the nonlocal $\bar \partial$ problem
\begin{equation} \label{eq:dbar2p1} \frac{\partial \chi}{\partial \bar{\lambda}} (\lambda;u,v,t) = \iint_{\mathbb{C}} R(\lambda,w) e^{\phi(w;u,v,t) - \phi(\lambda;u,v,t)} \chi(w;u,v,t) dA(w)  \end{equation}
and is normalized so $\chi \to 1$ as $\lambda \to \infty$.
Equivalently, $\chi$ solves the equivalent integral equation
\begin{equation} \chi(\lambda;u,v,t) = 1 + \frac{1}{\pi} \iint_{\mathbb{C}} \iint_{\mathbb{C}} \frac{e^{\phi(w;u,v,t) - \phi(\zeta;u,v,t)}}{\lambda - \zeta} R(\zeta,w) \chi(w;u,v,t) dA(w) d A(\zeta).  \end{equation}
Define the operators
\begin{align} \label{eq:L1s}
L_1 \chi = & D_u D_v \chi + A D_v \chi + (B - a) \chi \\ \label{eq:L1e}
= & \chi_{uv} + \lambda \chi_v + a \lambda^{-1} \chi_u + A \chi_v + A a \lambda^{-1} \chi + B \chi,
\end{align}
and
\begin{align} \label{eq:L2s}
L_2 \chi = & D_t \chi - \alpha D_u^2\chi - \beta D_v^2 \chi - F D_v \chi - G \chi \\ \label{eq:L2e}
= & \chi_t - \alpha (\chi_{uu} + 2 \lambda \chi_u) - \beta\left(\chi_{vv} + 2 a \lambda^{-1} \chi_v \right) - F \left(\chi_v + a \lambda^{-1} \chi \right) - G \chi,
\end{align}
where
\begin{equation} D_u = \frac{\partial}{\partial u} + \lambda, \quad D_v = \frac{\partial}{\partial v} + a \lambda^{-1}, \quad D_t = \frac{\partial}{\partial t} + (\alpha \lambda^2 + \beta \lambda^{-2}) t.  \end{equation}
Applying $L_j$ to the nonlocal $\bar \partial$ problem (\ref{eq:dbar2p1}) we find that $L_j \chi$ solve the nonlocal $\bar \partial$ problem
\begin{equation} 
\frac{\partial L_j \chi}{\partial \bar{\lambda}} (\lambda) = \gamma_j \delta(\lambda) + \iint_{\mathbb{C}} R(\lambda, w) e^{\phi(w;u,v,t) - \phi(\lambda;u,v,t)} L_j \chi(w;u,v,t) dA(w), \end{equation}
for $L_j \chi$, where
\begin{equation}
\gamma_1 = \pi a \left( \chi_u(0) + A \chi(0) \right), \gamma_2 = -\pi a(2 \beta \chi_v(0) + F \chi(0))
\end{equation}
are such that
\begin{equation} \left[\frac{\partial}{\partial \bar \lambda},L_j\right] = \gamma_j \delta(\lambda). \end{equation}
The condition that $\gamma_j = 0$ is equivalent to the assumption that $L_j \chi(\lambda)$ is holomorphic at $\lambda = 0$. 

The crux is that if $A,B,F$ and $G$ are set according to
\begin{equation} \label{eq:potentials} s = -\log(\chi_0^0), \quad A = s_u, \quad B = - (\chi_1^\infty)_{v}, \quad F = 2\beta s_v, \quad G = -2\alpha (\chi_1^\infty)_{u},  \end{equation}
then $L_j \chi \to 0$ as $\lambda \to \infty$, and $\gamma_j = 0$ making $L_j \chi$ holomorphic off of $\lambda \in \mathbb{C} \setminus \{\text{poles}\}$.
If it can be proved that the integral equation for $\chi$ is a Fredholm integral equations of index $0$, then
\begin{equation} \label{eq:Manlindress} L_1 \chi(\lambda; u,v,t) = 0, \quad L_2 \chi(\lambda; u,v,t) = 0 \end{equation}
because $L_j \chi$ must satisfy homogenous Fredholm integral equations of index $0$.
It is easy to see that $\chi$ solves (\ref{eq:Manlindress}) if and only if $e^{\phi(\lambda;u,v,t)} \chi(\lambda;u,v,t)$ solves (\ref{eq:linsysMan}).
Since we have produced a meromorphic family of solutions to (\ref{eq:linsysMan}), the functions $s$ and $c = B-\frac{1}{2} s_{uv}$ solve the 2+1 dimensional Kaup--Broer system.
In the soliton case, the integral equations is clearly Fredholm because it is equivalent to a finite dimensional linear equation. 

\begin{thm}
Suppose the $R : \mathbb{C}^2 \to \mathbb{C}$ is a generalized function that has been chosen so that the operator $\mathcal{F}$ defined by
\begin{equation} \mathcal{F} \chi(\lambda;u,v,t) = \frac{1}{\pi} \iint_{\mathbb{C}} \iint_{\mathbb{C}} \frac{e^{\phi(w;u,v,t) - \phi(\zeta;u,v,t)}}{\lambda - \zeta} R(\zeta,w) \chi(w;u,v,t) dA(w) d A(\zeta)\end{equation}
is a Fredholm operator of index $0$ with a trivial null space.
Then there is a unique solution to the integral equation
\begin{equation}  (\mathcal{I} - \mathcal{F}) \chi(\lambda;u,v,t) = 1. \end{equation}
The function $\chi$ solves the $\bar \partial$ problem
\begin{equation} \frac{\partial \chi}{\partial \bar{\lambda}}(\lambda;u,v,t) = \iint_{\mathbb{C}} R(\lambda,w) e^{\phi(w;u,v,t) - \phi(\lambda;u,v,t)} \chi\left(w;u,v,t\right) dA(w), \end{equation}
and is the unique solution normalized by $\chi(\lambda) \to 1$ as $\lambda \to \infty$.

The function $\chi$ has expansions
\begin{equation} \chi(\lambda;u,v,t) = \sum_{n = 0}^\infty \chi_n^0(u,v,t) \lambda^n, \quad \chi(\lambda;x,t) = \sum_{n = 0}^\infty \chi_n^\infty (u,v,t) \lambda^{-n}.   \end{equation}
The functions $c$ and $s$ computed from
\begin{equation} \label{eq:sol2p1} s(u,v,t) = - \log(\chi_0^0(u,v,t)), \quad  c(u,v,t) = -(\chi_{1}^\infty)_v (u,v,t) - \frac{1}{2} s_{uv} (u,v,t),     \end{equation}
and $\Pi$ determined from $c$ by solving the linear differential equation
\begin{equation} \Pi_{uv} = 2Pc \end{equation}
solve the complete complexification of the integrable 2+1 dimensional generalization of the Kaup--Broer system (\ref{eq:ssys},\ref{eq:csys}).
\end{thm}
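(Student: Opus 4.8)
The plan is to prove the assertions of the theorem in turn, leaning on the commutator identity $[\partial/\partial\bar\lambda, L_j] = \gamma_j\delta(\lambda)$ and the potential assignments (\ref{eq:potentials}) assembled just before the statement. I would begin with existence and uniqueness of $\chi$: since $\mathcal{I} - \mathcal{F}$ is Fredholm of index $0$ and its null space is trivial, the Fredholm alternative makes it a bijection, so $(\mathcal{I} - \mathcal{F})\chi = 1$ has exactly one solution. To see that this $\chi$ solves the $\bar\partial$ problem, I would apply $\partial/\partial\bar\lambda$ to the integral equation and use the distributional identity $\partial_{\bar\lambda}(\lambda-\zeta)^{-1} = \pi\,\delta(\lambda-\zeta)$; the $\zeta$-integral then collapses and reproduces (\ref{eq:dbar2p1}). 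The normalization $\chi\to 1$ as $\lambda\to\infty$ is immediate from $(\lambda-\zeta)^{-1}\to 0$, and uniqueness among normalized solutions follows because any two would differ by a solution of the homogeneous integral equation that vanishes at infinity, which the trivial null space forces to be zero.

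I would next obtain the two expansions. Expanding $(\lambda-\zeta)^{-1} = \sum_{n\ge 0}\zeta^n\lambda^{-n-1}$ inside the integral equation yields the asymptotic series at $\infty$ with coefficients $\chi_n^\infty$, of which only $\chi_0^\infty = 1$ and $\chi_1^\infty$ enter (\ref{eq:sol2p1}). For the expansion at the origin I would use that $R$ is chosen so that the source of the $\bar\partial$ problem is supported away from $\lambda = 0$, whence $\chi$ is holomorphic there, its Taylor coefficients $\chi_n^0$ are well defined, and $\chi_0^0 = \chi(0) \neq 0$ makes $s = -\log\chi_0^0$ meaningful.

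The heart of the argument is showing that $s$ and $c$ from (\ref{eq:sol2p1}) solve (\ref{eq:ssys},\ref{eq:csys}). Defining $A,B,F,G$ by (\ref{eq:potentials}) and forming $L_1\chi, L_2\chi$ as in (\ref{eq:L1s})--(\ref{eq:L2e}), the commutator identity shows that each $L_j\chi$ satisfies the same nonlocal $\bar\partial$ problem as $\chi$ but with an added point source $\gamma_j\delta(\lambda)$. The choices $A = s_u$ and $F = 2\beta s_v$ are engineered precisely so that $\gamma_1 = \pi a(\chi_u(0) + A\chi(0))$ and $\gamma_2 = -\pi a(2\beta\chi_v(0) + F\chi(0))$ vanish; I would verify this by differentiating $s = -\log\chi_0^0$ and matching the results against $\chi_u(0)/\chi(0)$ and $\chi_v(0)/\chi(0)$. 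With $\gamma_j = 0$ the source is removed and $L_j\chi$ is holomorphic at the origin, while $B = -(\chi_1^\infty)_v$ and $G = -2\alpha(\chi_1^\infty)_u$ are chosen so that the leading large-$\lambda$ contribution of $L_j\chi$ cancels, giving $L_j\chi \to 0$ as $\lambda\to\infty$. Hence $L_j\chi$ solves the homogeneous integral equation $(\mathcal{I}-\mathcal{F})(L_j\chi) = 0$, and the trivial null space forces $L_1\chi = L_2\chi = 0$, which is (\ref{eq:Manlindress}).

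Finally, $L_1\chi = L_2\chi = 0$ is equivalent to $\psi = e^{\phi}\chi$ solving the linear system (\ref{eq:linsysMan}); since this holds for a meromorphic family in $\lambda$, the compatibility of the pair is the Manakov triple equation (\ref{eq:Manakov}), which Section 2 shows is equivalent to (\ref{eq:ssys},\ref{eq:csys}) for $s$ and $c = B - \tfrac{1}{2}s_{uv}$, with $\Pi$ recovered from $\Pi_{uv} = 2Pc$. The main obstacle I expect is the bookkeeping of the previous paragraph: one must track the $\lambda\to\infty$ and $\lambda\to 0$ behavior of $L_j\chi$ to sufficient order to confirm simultaneously the vanishing of $\gamma_j$ and the decay of $L_j\chi$, and check that these two demands are met by the single pair of assignments in (\ref{eq:potentials}). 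A secondary subtlety is justifying the interchange of $\partial/\partial\bar\lambda$ with the double integral and the distributional manipulations for the generalized function $R$, which is where the Fredholm hypothesis and the support properties of $R$ carry the analytic weight.
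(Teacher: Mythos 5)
Your proposal is correct and follows essentially the same route as the paper: the Fredholm alternative for existence and uniqueness, the commutator identity $[\partial/\partial\bar\lambda, L_j]=\gamma_j\delta(\lambda)$ together with the potential assignments (\ref{eq:potentials}) to make $L_j\chi$ satisfy the homogeneous equation and hence vanish, and then the passage from the meromorphic family of solutions of (\ref{eq:linsysMan}) to the Manakov triple and the system (\ref{eq:ssys},\ref{eq:csys}). The only additions you make beyond the paper's argument are explicit justifications (the geometric series for the expansion at infinity, the support condition on $R$ for holomorphy at the origin) that the paper leaves implicit.
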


\begin{rmk}
In principle, all systems obtained from the solution $\chi$ to the nonlocal $\bar \partial$ problem (\ref{eq:dbar2p1}) are conservation laws of (\ref{eq:ssys},\ref{eq:csys}).
The function $\chi$ is nothing but the famous $\tau$-function used in the Sato theory.
This is a consequence of the theory discussed in \cite{ZM84,ZM85,Z89}.
\end{rmk}

\begin{rmk}
The operators $D_{u,v,t}$ and $L_{1,2}$ and the dressing functions $R$ with $a = 1$ discussed in this section are closely related to those used to solve the Veselov--Novikov equation with the dressing method \cite{B87}.
The differences are:
\begin{itemize}
\item We replace the ``light cone" coordinates $u$, $v$ with complex conjugate coordinates $u \to \bar z$, and $v \to z$.
\item We replace $\alpha \lambda^2 + \beta \lambda^{-2}$ in the definitions of $D_t$ and $\phi$ with $\lambda^3 + \lambda^{-3}$.
\end{itemize}
The Riemann--Hilbert problem used in \cite{B87} is really just a nonlocal $\bar \partial$ problem in the case where $R$ is supported on an arc.
By setting $A = 0$ and replacing $u$ and $v$ with the complex conjugated coordinates $\bar z$ and $z$, we end up with the operator $L$ evolving in the lax equation for the Veselov--Novikov equation \cite{B87,M76,NV84}.
\end{rmk}

\subsection{A Class of Exact Solutions to the 2+1 Dimensional Kaup--Broer System Generalizing the N-Soliton Solutions}
\label{ssec:Nsolgen}
We can consider a completely solvable case of the nonlocal $\bar \partial$ problem complex 2+1 dimensional system corresponding to a dressing function of the form
\begin{equation} R(\lambda,w) = \sum_{n = 1}^N f_n(\lambda) g_n(w). \end{equation}
Then the solution to the nonlocal $\bar \partial$ problem is of the form
\begin{equation} \label{eq:chigensol}
\chi(\lambda;u,v,t) = 1 + \frac{1}{\pi} \sum_{n = 1}^N \Phi_n(u,v,t) \hat f_n(u,v,t) ,
\end{equation}
where $\hat f_n$ denotes the integral transform 
\begin{equation} \hat f_n (\lambda;u,v,t) = \iint_{\mathbb{C}} \frac{f_n(\zeta) e^{-\phi(\zeta;u,v,t)}}{\lambda - \zeta} dA(\zeta) \end{equation}
of $f_n$, and
\begin{equation} \label{eq:chigenphi}
\Phi_n(u,v,t) = \iint_{\mathbb{C}} g_n(w) e^{\phi(w;u,v,t)} \chi(w;u,v,t) dA(w).
\end{equation}
Substitution of (\ref{eq:chigensol}) into the definition of $\Phi_n$ gives the equation
\begin{equation}
\Phi_n(u,v,t) = G_n(u,v,t) + \sum_{m = 1}^N M_{nm}(u,v,t) \Phi_m(u,v,t),
\end{equation}
where
\begin{equation} G_n(u,v,t) = \iint_{\mathbb{C}} g_n(w) e^{\phi(w;u,v,t)} dA(w), \end{equation}
and
\begin{equation} \small M_{nm}(u,v,t) = \frac{1}{\pi} \iint_{\mathbb{C}} \iint_{\mathbb{C}} \frac{f_m(\zeta) g_n(w) e^{\phi(w;u,v,t) - \phi(\zeta;u,v,t)}}{w - \zeta} dA(\zeta) dA(w). \end{equation}
If $\det(I-M) \ne 0$ then the functions $\chi_0^0(u,v,t)$ and $\chi_1^\infty(u,v,t)$ given by
\begin{equation} \chi_1^\infty(u,v,t) = \frac{1}{\pi} \sum_{n = 1}^N \Phi_n(u,v,t) F_n(u,v,t) \end{equation}
where
\begin{equation} F_n(u,v,t) = \iint_{\mathbb{C}} f_n(\zeta) e^{-\phi(\zeta;u,v,t)} dA(\zeta), \end{equation}
and
\begin{equation} \chi_0^0(u,v,t) = 1 + \hat f_n(0;u,v,t),  \end{equation}
can then be used to compute a solution $s$, $c$ to the 2+1 dimensional Kaup--Broer system by theorem 1. When $f_n$ and $g_n$ are taken to be constant multiples of delta functions supported on some points $z_n$, $w_n$, this solution reduces to an $N$-soliton solution.

\subsection{Reduction of the 2+1 Dimensional Dressing Method to 1+1 Dimensions}

We now present the Dressing Method for the Kaup--Broer system.
The dressing method allows solutions to all 4 scaling classes of the Kaup--Broer to be computed by solving a nonlocal $\bar \partial$ problem.

From (\ref{eq:reduction}) we see that dimensional reduction to the 4 scaling classes of the Kaup--Broer system can be achieved provided the solution depends only on $x = u+v$, and $\alpha = -\beta = \frac{1}{4}$.
The $u$, $v$, and $t$ dependence enters the $\bar \partial$ problem only in terms of the combination
\begin{equation} \phi(w;u,v,t) - \phi(\lambda;u,v,t) = -(\lambda - w) u - a (\lambda^{-1} - w^{-1}) v - (\alpha \lambda^2 + \beta \lambda^{-2} - \alpha w^2 - \beta w^{-2}) t.\end{equation}
This will depend only on $x = u+v$ if and only if
\begin{equation} \lambda - w = a (\lambda^{-1} - w^{-1}) \end{equation}
which is solved if $w$ relates to $a$ by $w = -a\lambda^{-1}$.
Therefore suppose that
\begin{equation} R(\lambda,w) = \delta(w+a\lambda^{-1}) R_0(\lambda)\end{equation} where $R_0$ has compact support on $\mathbb{C} \setminus \{-1,0,1\}$.
Then (\ref{eq:dbar2p1}) reduces to the nonlocal $\bar \partial$ problem 
\begin{equation} \label{eq:dbar1p1}
\frac{\partial \chi}{\partial \bar \lambda}(\lambda;x,t) =  R_0 (\lambda) e^{- 2\tilde\phi\left(\lambda;x,t\right)}\chi\left(- a \lambda^{-1}; x,t  \right),
\end{equation}
where
\begin{equation}
\tilde \phi(\lambda;x,t) = \frac{1}{2}  \left(\lambda + a \lambda^{-1} \right) x + \frac{1}{4} \left(\lambda^2 - \lambda^{-2} \right) t.
\end{equation}
This nonlocal $\bar\partial$ problem depends only on $x$, so the solution to (the 2+1 dimensional generalization of the).
We look for a solution $\chi$ to the nonlocal $\bar \partial$ problem normalized so that $\chi(\lambda) \to 1$ as $\lambda \to \infty$.
This solution $\chi$ solves the integral equation
\begin{equation} \chi(\lambda;x,t) = 1 + \mathcal{F}_0 \chi(\lambda) \iff (\mathcal{I} - \mathcal{F}_0) \chi = 1 , \end{equation}
where $\mathcal{I}$ is the identity operator and $\mathcal{F}_0$ is the integral operator given by
\begin{equation} \label{eq:defFop} \mathcal{F}_0 \chi(\lambda;x,t) =  \frac{1}{\pi} \iint_{\mathbb{C}} \frac{R_0(\zeta) e^{-2 \tilde \phi(\zeta;x,t)}}{\lambda - \zeta} \chi(- a\zeta^{-1};x,t) dA(\zeta). \end{equation}

Therefore the following is a corollary to theorem 1:
\begin{cor}
Suppose the dressing function $R_0 : \mathbb{C} \to \mathbb{R}$ has been chosen so that the operator $\mathcal{F}_0$ given by (\ref{eq:defFop}) is a Fredholm operator of index $0$ with a trivial null space.
Then there is a unique solution to the integral equation
\begin{equation} (\mathcal{I} - \mathcal{F}_0) \chi(\lambda;x,t) = 1. \end{equation}
The function $\chi$ solves the nonlocal $\bar \partial$ problem
\begin{equation} \frac{\partial \chi}{\partial \bar \lambda}(\lambda;x,t) = R_0(\lambda) e^{-2 \tilde \phi(\lambda;x,t)} \chi\left(- a\lambda^{-1};x,t\right), \end{equation}
and is the unique solution normalized by $\chi(\lambda) \to 1$ as $\lambda \to \infty$.

The function $\chi$ has expansions
\begin{equation} \label{eq:coeffs1p1} \chi(\lambda;x,t) = \sum_{n = 0}^\infty \chi_n^0(x,t) \lambda^n, \quad \chi(\lambda;x,t) = 1 + \sum_{n = 1}^\infty \chi_n^\infty (x,t) \lambda^{-n}.   \end{equation}
Then the functions $\varphi$ and $\eta$ defined by
\begin{equation}  \varphi(x,t) = - \sigma \log(\chi_0^0(x,\sigma t)), \quad  \eta(u,v,t) = a(\chi_{1}^\infty)_x (x,\sigma t) + \frac{1}{2} a \log(\chi_0^0(x,\sigma t)),     \end{equation}
solve the Kaup--Broer system (\ref{eq:KaupSystem1}, \ref{eq:KaupSystem2}) with $\varepsilon = -a\sigma^2$ and $\mu =- \frac{a}{4}$.
\end{cor}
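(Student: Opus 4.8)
The plan is to obtain this corollary as a direct specialization of Theorem 1 to the reduced dressing data $R(\lambda,w) = \delta(w + a\lambda^{-1}) R_0(\lambda)$ identified in the discussion preceding the statement, and then to track how the 2+1 dimensional solution formulas (\ref{eq:sol2p1}) collapse onto the 1+1 dimensional fields through the reduction ansatz (\ref{eq:reduction}). First I would confirm that the hypotheses on $\mathcal{F}_0$ are precisely the specialization of those on $\mathcal{F}$: inserting the delta-supported kernel into $\mathcal{F}$ integrates out the $w$ variable against $\delta(w + a\lambda^{-1})$, replacing $\chi(w)$ by $\chi(-a\zeta^{-1})$ and reducing the operator to (\ref{eq:defFop}). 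Consequently the Fredholm index-zero and trivial-null-space assumptions transfer verbatim, and the existence and uniqueness of the normalized solution $\chi$ to $(\mathcal{I} - \mathcal{F}_0)\chi = 1$, together with the fact that it solves the reduced nonlocal $\bar\partial$ problem (\ref{eq:dbar1p1}), follow immediately from the corresponding conclusions of Theorem 1.

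Next I would establish that this $\chi$ depends on $(u,v)$ only through $x = u+v$. This is forced by the computation preceding the corollary: the phase difference $\phi(w) - \phi(\lambda)$ depends on $(u,v)$ only through $x$ exactly when $w = -a\lambda^{-1}$, which is the support of the delta function, so every exponential appearing in the integral equation reduces to $e^{-2\tilde\phi}$, a function of $x$ and $t$ alone. Hence $\chi(\lambda;u,v,t) = \chi(\lambda;x,t)$, and the Laurent expansions (\ref{eq:coeffs1p1}) at $\lambda = 0$ and $\lambda = \infty$ follow from the analyticity of $\chi$ off the compact support of $R_0$ together with the normalization $\chi \to 1$.

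With $\chi$ in hand, Theorem 1 supplies $s = -\log \chi_0^0$ and $c = -(\chi_1^\infty)_v - \frac{1}{2} s_{uv}$ solving the complexified 2+1 system (\ref{eq:ssys},\ref{eq:csys}) at $\alpha = -\beta = \frac{1}{4}$. I would then impose the reduction (\ref{eq:reduction}), defining $\varphi(x,t) = \sigma s(u,v,\sigma t)$ and $\eta(x,t) = -a c(u,v,\sigma t)$. Because $s$ and $c$ depend only on $x$, both $\partial_u$ and $\partial_v$ act as $\partial_x$, so $P$ collapses to $\frac{1}{2}\partial_x^2$ and $(\chi_1^\infty)_v = (\chi_1^\infty)_x$; substituting the Theorem 1 formulas and using $s_{uv} = s_{xx}$ then produces the displayed expressions for $\varphi$ and $\eta$. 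The reduction argument from the introduction, namely that a solution $s,c$ of the 2+1 system at these parameters forces $\varphi,\eta$ to solve (\ref{eq:KaupSystem1},\ref{eq:KaupSystem2}) with $\varepsilon = -a\sigma^2$ and $\mu = -\frac{a}{4}$, then finishes the proof.

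The main obstacle I anticipate is justifying the distributional reduction rigorously: the kernel $\delta(w + a\lambda^{-1}) R_0(\lambda)$ is a generalized function, so collapsing the double integral defining $\mathcal{F}$ to the single integral defining $\mathcal{F}_0$ must be carried out carefully enough that the Fredholm theory underpinning Theorem 1 genuinely applies to $\mathcal{F}_0$, and that the differentiations under the integral sign needed to pass from the $\bar\partial$ problem to the operators $L_1, L_2$ remain valid after the specialization. A secondary point deserving care is checking that the $\sigma$-rescaling of time in (\ref{eq:reduction}) is consistent with the $t$-dependence built into $\tilde\phi$, so that evaluating $\chi$ at $(x,\sigma t)$ reproduces the reduction ansatz without spurious factors.
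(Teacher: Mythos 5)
Your proposal is correct and follows essentially the same route as the paper: Section 3.2 derives the reduction by inserting the kernel $R(\lambda,w)=\delta(w+a\lambda^{-1})R_0(\lambda)$ into the 2+1 dimensional nonlocal $\bar\partial$ problem, observes that the phase difference then depends on $(u,v)$ only through $x=u+v$, and declares the result a corollary of Theorem 1 combined with the reduction ansatz (\ref{eq:reduction}). Your explicit substitution $\eta=-ac=a(\chi_1^\infty)_x+\tfrac{a}{2}s_{xx}$ is the intended computation (and incidentally shows that the term $\tfrac{1}{2}a\log(\chi_0^0)$ in the stated corollary is missing its $x$-derivatives).
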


\begin{rmk} Remark 3 should still applies. Therefore $\chi$ should be equivalent to the $\tau$ function in the Sato theory, and any system derivable from $\chi$ should be a conservation law of (\ref{eq:KaupSystem1}, \ref{eq:KaupSystem2}). In particular, the normal and reversed gravity systems are conservation laws for each other (although they will not both be real and nonsingular). \end{rmk}

\begin{rmk} \label{rmk:linsysred} The function $\psi(\lambda;x,t) = e^{\tilde \phi(\lambda;x,t)} \chi(\lambda;x,t)$ solves the linear system
\begin{equation} \tilde L(\lambda,t) \psi(\lambda;x,t) = 0,
\end{equation}
\begin{equation} \tilde M(\lambda,t) \psi(\lambda;x,t) =\psi_t(\lambda;x,t),\end{equation}
where
\begin{equation} \tilde L(\lambda,t) =  \frac{\partial^2}{\partial x^2} + A \frac{\partial}{\partial x} + B - \frac{\lambda-a\lambda^{-1}}{2} A - \left( \frac{\lambda+a\lambda^{-1}}{2} \right)^2,\end{equation}
\begin{equation} \tilde M(\lambda,t) =  \left(\frac{\lambda-a\lambda^{-1}}{2} + F\right) \frac{\partial}{\partial x} + G - \frac{\lambda-a\lambda^{-1}}{2} F .\end{equation}
The compatibility condition to be able to find a simultaneous solution to both equations in the system is equivalent to the Kaup--Broer system.
The potentials are set according to the (\ref{eq:potentials}) except we use coefficients from (\ref{eq:coeffs1p1}) and the $u$ and $v$ derivatives reduce to $x$ derivatives. \end{rmk}

Any choice of $R_0$ leads to a solution to the Kaup--Broer system, however most choices of $R_0$ will lead to complex solutions with singularities.
Reality conditions on $R_0$ must be imposed to guarantee the solutions so produced are real and bounded.

\section{N-Soliton Solutions and the Limit $N \to \infty$ for the 1+1 Dimensional Kaup--Broer System}

In this section we will consider the case of the N-soliton solutions in the 1+1 dimensional case with $a = -1$, $\sigma=1$ (the most studies case).
We then take the infinite soliton limit using the primitive solution method.

To produce an N-soliton solutions consider some parameters $\{ \lambda_n,r_n \}_{n = 1}^N$ which we will call the discrete spectral data for the N-soliton solutions.
We will assume $\lambda_n \in \mathbb{R}\setminus\{-1,0,1\}$, $r_n > 0$ if $\lambda_n \in (-1,0)\cup(1,\infty)$, and $r_n<0$ if $\lambda_n \in (-\infty,-1) \cup (0,1)$.
The positive $\lambda_n$ correspond to left moving solitons, while the negative $\lambda_n$ correspond to right moving solitons.

We take the dressing function \begin{equation} R_0(\lambda) = \sum_{n = 1}^N r_n \delta(\lambda - \lambda_n) \end{equation} so the integral equation becomes the algebraic equation
\begin{equation} \label{eq:Nsol}
\chi(\lambda;x,t) = 1 + \frac{1}{\pi} \sum_{n =1}^N \frac{r_n e^{-\tilde \phi(\lambda_n;x,t)}}{\lambda - \lambda_n} \chi\left(\lambda_n^{-1}; x,t\right).
\end{equation}
In particular, the solution is a rational function of the form
\begin{equation} \chi(\lambda;x,t) = 1 + \frac{1}{\pi} \sum_{n = 1}^n \frac{f_n(x,t)}{\lambda - \lambda_n}. \end{equation}
Plugging the rational form of the solution into the algebraic equation (\ref{eq:Nsol}) gives the linear equation 
\begin{equation} \label{eq:Nsollinsys} \sum_{m = 1}^N \left(\delta_{nm}  + \frac{1}{\pi}  \frac{r_n e^{-\tilde \phi(\lambda_n;x,t)}}{\lambda_m - \lambda_n^{-1}}\right) f_m = r_n e^{-\tilde \phi(\lambda_n;x,t)}. \end{equation}
This linear equation can be solved numerically for the values $f_m$, $m = 1,2,\dotsm,N$.
From these values $f_m$ an N-solution $\varphi$, $\eta$ to the Kaup--Broer system can be computed by theorem 1 via
\begin{equation} \label{eq:soldisc} \varphi(x,t) = -\log \left(1 - \frac{1}{\pi} \sum_{n = 1}^\infty \frac{f_n(x,t)}{\lambda_n} \right), \quad \eta(x,t) = -\frac{1}{2} \varphi_{xx} -\frac{1}{\pi} \sum_{n = 1}^N (f_n)_x(x,t).  \end{equation}
We describe these solutions as N-solitons, because as $t \to \pm \infty$ the field $\eta$ that could represent a free surface water wave has the form of N well separated solitary waves.
The corresponding field $\varphi$ limits to a solution with N kinks as $t \to \pm \infty$.
The constituent solitons of $\eta$ have well defined velocities $v_n^{sol} = -( \lambda_n + \lambda_n^{-1} )/2$ when the solitons are well separated and noninteracting.

Consider the contour
\begin{equation} \Gamma = [-\gamma_2,-\gamma_1] \cup [-\gamma_1^{-1},-\gamma_2^{-1}] \cup [\gamma_2^{-1},\gamma_1^{-1}] \cup [\gamma_1,\gamma_2]  \end{equation}
with $1 < \gamma_1 < \gamma_2 < \infty$ oriented from left to right on all components,
and let $R_1$ be a real valued functions on $\Gamma$ that is nonnegative on $[-\gamma_1^{-1},-\gamma_2^{-1}] \cup [\gamma_1, \gamma_2]$ and nonpositive on $[-\gamma_2,-\gamma_1] \cup [\gamma_2^{-1},\gamma_1^{-1}]$.
Moreover, we assume that $R_1$ only vanishes on a finite number of intervals, and that $R_1$ is H\"{o}lder continuous on the support of $R_1$.
The pair $\{\Gamma,R_1\}$ are the spectral data for a class of solutions determined by the dressing functions
\begin{equation} R_0(\lambda) = \int_{\Gamma} \delta(\lambda-s) R_1(s) ds. \end{equation}
The spectral data give rise to the nonlocal $\bar \partial$ problem
\begin{equation} \label{eq:dbar1p1} \frac{\partial \chi}{\partial \bar{\lambda}}(\lambda;x,t) = \int_{\Gamma} \delta(\lambda - s)  e^{-2\tilde \phi(s;x,t)}R_1(s) \chi(s^{-1}) ds,   \end{equation}
and the integral equation giving the solution $\chi$ to the nonlocal $\bar \partial$ problem normalized so that $\chi \to 1$ as $\lambda \to \infty$ is
\begin{equation} \label{eq:dbarint} \chi(\lambda) - \frac{1}{\pi} \int_{\Gamma}  \frac{e^{-2\tilde \phi(s;x,t)}R_1(s) }{\lambda-s} \chi(s^{-1})  ds = 1. \end{equation}
A subtlety here is that when $R_1(s) \ne 0$ and $R_1(s^{-1}) \ne 0$ simultaneously, then $\chi$ can become singular on $\Gamma$.
This means that we should interpret the nonlocal dbar problem (\ref{eq:dbar1p1}) as a jump problem
\begin{equation} \chi_+(s)-\chi_-(s)=R_1(s)e^{-2\tilde\phi(s;x,t)}(\chi_+(s^{-1})+\chi_-(s^{-1})) \end{equation}
where the solution has the form
\begin{equation} \chi(\lambda;x,t) = 1 +  \frac{1}{\pi} \int_\Gamma \frac{f(s)}{\lambda - s} ds. \end{equation}
The function $f$ solves the 1D integral equation 
\begin{equation} \label{eq:inteqf} f(s;x,t) - e^{-2\tilde \phi(s;x,t)} R_1(s) \mathcal{H}_\Gamma f(s^{-1};x,t) = e^{-2\tilde \phi(s;x,t)} R_1(s), \end{equation}
where $\mathcal{H}_\Gamma$ is the Hilbert transform on $\Gamma$ defined by
\begin{equation} \mathcal{H}_\Gamma g(s) = \frac{1}{\pi} \fint_\Gamma \frac{g(s';x,t)}{s - s'} ds'. \end{equation}
Note that the support of the function $f$ solving (\ref{eq:inteqf}) is the same as the support of $R_1$.
Once the solution $f(s)$ to integral equation (\ref{eq:inteqf}) has been computed, a solution $\varphi$, $\eta$ to the Kaup--Broer system can be computed by theorem 1 as
\begin{equation} \label{eq:solcont} \begin{cases} \displaystyle{ \varphi(x,t) = -\log \left(1 - \frac{1}{\pi} \int_{\Gamma} \frac{f(s;x,t)}{s}  ds \right)} \\[1em] \displaystyle{ \eta(x,t) = - \frac{1}{2} \varphi_{xx}(x,t) - \frac{1}{\pi} \int_\Gamma f_x(s;x,t) ds} \end{cases}. \end{equation}
We call these solutions primitive solutions, these are analogous to the primitive solutions to the KdV equation \cite{DZZ16,ZDZ16,ZZD16,NZZ19}.

\begin{thm}
Let $R_1$ be a real valued functions on $\Gamma$ that is nonnegative on $[-\gamma_1^{-1},-\gamma_2^{-1}] \cup [\gamma_1, \gamma_2]$ and nonpositive on $[-\gamma_2,-\gamma_1] \cup [\gamma_2^{-1},\gamma_1^{-1}]$.
Suppose that $f(s,x,t)$ solves (\ref{eq:inteqf}), then 
$\eta$ and $\varphi$ defined from $f$ according to (\ref{eq:solcont}) solve the 1+1D Kaup--Broer system with $a=-1$ and $\sigma = 1$.
\end{thm}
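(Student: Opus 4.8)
The plan is to obtain this theorem as the continuous-spectrum specialization of the reduction Corollary (hence of Theorem 1), applied to the contour-supported dressing function $R_0(\lambda)=\int_\Gamma \delta(\lambda-s)R_1(s)\,ds$ in the class $a=-1$, $\sigma=1$ (for which $\varepsilon=-a\sigma^2=1$ and $\mu=-a/4=\tfrac14$). Since the theorem already \emph{assumes} a solution $f$ of the singular integral equation (\ref{eq:inteqf}), the substance of the argument is to verify that the Cauchy transform built from $f$ genuinely solves the reduced nonlocal $\bar\partial$ problem, after which the Corollary supplies $\varphi$ and $\eta$ and certifies that they satisfy the Kaup--Broer system (\ref{eq:KaupSystem1},\ref{eq:KaupSystem2}).

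\textbf{From $f$ to a solution of the $\bar\partial$ problem.} First I would set $\chi(\lambda;x,t)=1+\tfrac{1}{\pi}\int_\Gamma \tfrac{f(s;x,t)}{\lambda-s}\,ds$. Because $\Gamma$ is a finite union of compact real arcs and, by hypothesis, $R_1$ (hence $f$) is H\"older continuous on its support, $\chi$ is holomorphic off $\Gamma$, tends to $1$ as $\lambda\to\infty$, and has H\"older boundary values $\chi_\pm$ on $\Gamma$. Computing $\partial_{\bar\lambda}\chi$ in the distributional sense (the $\bar\lambda$-derivative of the Cauchy kernel being a planar point mass) shows that the density of $\partial_{\bar\lambda}\chi$ along $\Gamma$ is exactly $f(s)$, so the reduced $\bar\partial$ problem becomes $f(s)=e^{-2\tilde\phi(s;x,t)}R_1(s)\,\chi(s^{-1};x,t)$. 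The decisive structural fact is that $\Gamma$ is invariant under the inversion $s\mapsto s^{-1}$ — the arcs $[\gamma_1,\gamma_2]$ and $[\gamma_2^{-1},\gamma_1^{-1}]$ are exchanged, as are $[-\gamma_2,-\gamma_1]$ and $[-\gamma_1^{-1},-\gamma_2^{-1}]$ — which is precisely the symmetry forced by the appearance of $\chi(s^{-1})$ when $a=-1$. At points where $R_1(s^{-1})=0$ the function $\chi$ is continuous at $s^{-1}$, and inserting $\chi(s^{-1})=1+\mathcal{H}_\Gamma f(s^{-1})$ (from the Sokhotski--Plemelj average) returns exactly (\ref{eq:inteqf}); at points where $R_1(s^{-1})\neq 0$ one must instead use the jump reading $\chi_+(s)-\chi_-(s)=R_1(s)e^{-2\tilde\phi(s;x,t)}\bigl(\chi_+(s^{-1})+\chi_-(s^{-1})\bigr)$ and check, via Plemelj and the principal-value transform $\mathcal{H}_\Gamma$, that it reduces to the same equation. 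Thus a solution $f$ of (\ref{eq:inteqf}) produces a normalized $\chi$ solving the reduced nonlocal $\bar\partial$ problem, equivalently the integral equation (\ref{eq:dbarint}).

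\textbf{Reading off $\varphi,\eta$.} With $\chi$ in hand, the conclusion follows from the Corollary: the normalized solution of the reduced $\bar\partial$ problem yields a solution of the Kaup--Broer system in the $a=-1$, $\sigma=1$ class. It remains to match the abstract formulas to (\ref{eq:solcont}) by extracting the expansion coefficients (\ref{eq:coeffs1p1}) from the Cauchy representation. Evaluating at $\lambda=0$ gives $\chi_0^0=\chi(0)=1-\tfrac{1}{\pi}\int_\Gamma \tfrac{f(s)}{s}\,ds$, which reproduces the stated $\varphi=-\log\chi_0^0$; expanding as $\lambda\to\infty$ gives $\chi_1^\infty=\tfrac{1}{\pi}\int_\Gamma f(s)\,ds$, so $(\chi_1^\infty)_x=\tfrac{1}{\pi}\int_\Gamma f_x(s)\,ds$, and combining $-(\chi_1^\infty)_x$ with the term $-\tfrac12 s_{xx}=-\tfrac12\varphi_{xx}$ coming from $c=B-\tfrac12 s_{uv}$ reproduces the stated $\eta$. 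The compatibility of the linear pair $\tilde L,\tilde M$ of Remark \ref{rmk:linsysred} then certifies that these $\varphi,\eta$ solve the PDE.

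\textbf{Main obstacle.} The delicate point is the rigorous treatment at the arcs where both $R_1(s)$ and $R_1(s^{-1})$ are nonzero: there $\chi$ is genuinely discontinuous across $\Gamma$, both boundary values $\chi_\pm(s^{-1})$ enter, and one must confirm that the Plemelj formulas and $\mathcal{H}_\Gamma$ combine to give precisely (\ref{eq:inteqf}) rather than a superficially similar but inequivalent relation, with all orientation and kernel constants correctly accounted. Legitimacy of the distributional $\partial_{\bar\lambda}$ computation — so that Theorem 1's application of $L_1,L_2$ is valid — rests on the assumed H\"older regularity of $R_1$. The sign conditions on $R_1$ are not needed for this formal verification that the PDE is satisfied; they are what would guarantee the solvability and reality underlying the Corollary's Fredholm hypotheses, and I would flag them as the step demanding the most care were one to upgrade ``if $f$ solves (\ref{eq:inteqf})'' into an unconditional existence statement.
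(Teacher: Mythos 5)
Your proposal is correct and follows essentially the same route as the paper, which gives no separate proof of this theorem but obtains it from the derivation immediately preceding it: the Cauchy-transform ansatz $\chi=1+\tfrac{1}{\pi}\int_\Gamma\tfrac{f(s)}{\lambda-s}\,ds$, the jump/Plemelj reading of the reduced nonlocal $\bar\partial$ problem on the inversion-symmetric contour $\Gamma$, and then Corollary 2 (hence Theorem 1) to certify that the extracted coefficients $\chi_0^0$ and $\chi_1^\infty$ yield a solution in the $a=-1$, $\sigma=1$ class. Your added care with the Sokhotski--Plemelj constants on arcs where both $R_1(s)$ and $R_1(s^{-1})$ are nonzero, and your observation that the sign hypotheses on $R_1$ are not used in this conditional verification, go slightly beyond what the paper records but do not change the argument.
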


Consider the following property (I) on $R_1$: If $s$ is in the support of $R_1$, then $s^{-1}$ is not in the support of $R_1$.
If $R_1$ satisfies this property, then the principle value integral appearing in equation (\ref{eq:inteqf}) reduces to a regular integral and the equation is a regular Fredholm integral equation of the second kind on the support of $R_1$.
Otherwise, equation (\ref{eq:inteqf}) is a nonlocal singular integral equation.

Suppose that property (I) is satisfied, the integral equation (\ref{eq:inteqf}) is approximated via an $N$ point quadrature rule on the support of $R_1$, and the integrals appearing in (\ref{eq:solcont}) are approximated by the same quadrature rule.
Then the approximations of (\ref{eq:inteqf}) and (\ref{eq:solcont}) by quadrature have the form of (\ref{eq:Nsollinsys}) and (\ref{eq:soldisc}), and thus the approximation to the solution determined by $\{\Gamma,R_1\}$ is an exact $N$-soliton solution.
In this manner, the solution determined by $\{\Gamma,R_1\}$ can be interpreted as an element of the closure of the N-soliton solutions to the Kaup--Broer system in the topology of uniform convergence in compact sets.
If property (I) is not satisfied, then care needs to be taken in dealing with the singularity in defining such a quadrature rule.
However, even when property (I) is not satisfied, the interpretation of the solution as an element of the closure of the N-solitons solutions is still valid.

The primitive solutions to the Kaup--Broer system determined by $\{\Gamma,R_1\}$ can also be computed via the solution to a Riemann--Hilbert problem.
To see this, consider the function $\boldsymbol{\chi}(\lambda) = [\chi(\lambda),\chi(\lambda^{-1})]$ and the contour
\begin{equation} \tilde \Gamma = \left\{ s \in \Gamma : R_1(s) \ne 0 \text{ or } R_1(s^{-1}) \ne 0   \right\}. \end{equation}
The contour $\tilde \Gamma$ consists of $\Gamma$ with a finite number of intervals removed by the assumption that $R_1$ vanishes only on a finite number of intervals in $\Gamma$.
Then $\boldsymbol{\chi}$ solves the following Riemann--Hilbert problem:
\begin{rhp} \label{rhpKB} For all $x,t$ find a $1 \times 2$ vector valued function $\boldsymbol{\chi}(\lambda;x,t)$ such that
\begin{enumerate}
\item $\boldsymbol{\chi}$ is a holomorphic function of $\lambda \in \mathbb{C} \setminus \tilde \Gamma$.
\item The boundary values
\begin{equation}\boldsymbol{\chi}_+(s;x,t) = \lim_{\epsilon \to 0^+} \boldsymbol{\chi}(s+i\epsilon;x,t), \quad \boldsymbol{\chi}_-(s;x,t) = \lim_{\epsilon \to 0^+} \boldsymbol{\chi}(s-i\epsilon;x,t)\end{equation}
of $\boldsymbol{\chi}$ for $s \in \tilde \Gamma \setminus\{\text{endpoints of }\tilde\Gamma\}$ from above or below are continuous.
\item The function $\boldsymbol{\chi}$ has singularities that are less severe than poles on the endpoints of $\tilde \Gamma$.
\item The boundary values $\boldsymbol{\chi}_\pm(\lambda;x,t)$ of $\boldsymbol{\chi}(\lambda;x,t)$ from above and below for $\lambda \in \Gamma$ are related by
\begin{equation}
\boldsymbol{\chi}_+(s;x,t) = \boldsymbol{\chi}_-(s;x,t) V(s;x,y)
\end{equation}
where
\begin{equation} \label{eq:jump} V(s;x,t) =  \begin{pmatrix}  \frac{1 + R_1(s) R_1(s^{-1})}{1 - R_1(s) R_1(s^{-1})} & \frac{2 i  R_1(s^{-1})}{1 - R_1(s) R_1(s^{-1})} e^{2 \tilde \phi(s;x,t)} \\ -\frac{2 i  R_1(s)}{1 - R_1(s) R_1(s^{-1})} e^{-2 \tilde \phi(s;x,t)} & \frac{1 + R_1(s) R_1(s^{-1})}{1 - R_1(s) R_1(s^{-1})}  \end{pmatrix}. \end{equation}
\item The function $\boldsymbol{\chi}$ has the limiting behaviors $\chi_1(\lambda) \to 1$ as $\lambda \to \infty$ and $\chi_2(\lambda) \to 1$ as $\lambda \to 0$.
\item $\boldsymbol{\chi}$ satisfies the symmetry
\begin{equation} \boldsymbol{\chi}(s^{-1};x,t) = \boldsymbol{\chi}(s;x,t) \begin{pmatrix} 0 & 1 \\ 1 & 0 \end{pmatrix}. \end{equation}

\end{enumerate}
\end{rhp}

\begin{prop}
A solution to the Riemann--Hilbert problem has Taylor expansions of the form
\begin{equation} \boldsymbol{\chi}(\lambda;x,t) = (\chi_0^\infty(x,t),\chi_0^0(x,t)) + (\chi_1^\infty(x,t),\chi_1^0(x,t)) \lambda^{-1} + O(\lambda^{-2}), \quad \lambda \to \infty,  \end{equation}
\begin{equation} \boldsymbol{\chi}(\lambda;x,t) = (\chi_0^0(x,t),\chi_0^\infty(x,t)) + (\chi_1^0(x,t),\chi_1^\infty(x,t)) \lambda + O(\lambda^{2}), \quad \lambda \to 0, \end{equation}
where $\chi_0^\infty = 1$.
We can construct primitive solutions to the Kaup--Broer system using the Taylor coefficients via
\begin{equation} \label{eq:KBRHP} \varphi(x,t) = -\log \left(\chi_0^0(x,t) \right), \quad \eta(x,t) = - \frac{1}{2} \varphi_{xx}(x,t) - (\chi_0^0)_x(x,t) \end{equation}
by corollary 2.
\end{prop}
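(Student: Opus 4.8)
The plan is to reduce Riemann--Hilbert Problem \ref{rhpKB} to the scalar nonlocal $\bar\partial$ problem already solved in Corollary 2 and then merely transcribe the solution formulas, so that essentially no new analysis is required. The first step is to notice that the symmetry in item (6) eliminates the second component rather than constraining the pair: writing $\boldsymbol{\chi} = [\chi_1, \chi_2]$, the relation $\boldsymbol{\chi}(s^{-1}) = \boldsymbol{\chi}(s)\left(\begin{smallmatrix} 0 & 1 \\ 1 & 0 \end{smallmatrix}\right)$ reads componentwise as $\chi_1(s^{-1}) = \chi_2(s)$, so $\chi_2(\lambda) = \chi_1(\lambda^{-1})$ identically. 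Thus any solution is governed by the single scalar $\chi := \chi_1$, which by items (1), (3), and (5) is holomorphic off $\tilde\Gamma$, tends to $1$ as $\lambda \to \infty$, is regular at $\lambda = 0$ (since $\tilde\Gamma \subset \Gamma$ is bounded away from $0$), and has only sub-pole singularities at the endpoints.

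Next I would identify this $\chi$ with the function produced by Corollary 2. The discussion preceding the problem already exhibits $\boldsymbol{\chi} = [\chi(\lambda), \chi(\lambda^{-1})]$, built from the unique normalized solution $\chi$ of the integral equation (\ref{eq:dbarint}), as a solution of RHP \ref{rhpKB}; it remains to check that the problem admits only this solution. For that I would compute $\det V = 1$: setting $p(s) = R_1(s) R_1(s^{-1})$, the diagonal entries are $(1+p)/(1-p)$ while the off-diagonal product is $4p/(1-p)^2$, so $\det V = \big[(1+p)^2 - 4p\big]/(1-p)^2 = 1$. With unimodular jump, the normalizations at $0$ and $\infty$, and the endpoint condition (3) forbidding poles, a standard vanishing-lemma/Liouville argument rules out nontrivial homogeneous solutions and yields uniqueness; any solution of the RHP therefore coincides with $[\chi(\lambda), \chi(\lambda^{-1})]$.

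With this identification the Taylor expansions are a direct substitution. Inserting the two expansions of the scalar $\chi$ from (\ref{eq:coeffs1p1}) --- namely $\chi(\lambda) = 1 + \sum_{n \ge 1} \chi_n^\infty \lambda^{-n}$ as $\lambda \to \infty$ and $\chi(\lambda) = \sum_{n \ge 0} \chi_n^0 \lambda^n$ as $\lambda \to 0$ --- into $\boldsymbol{\chi} = [\chi(\lambda), \chi(\lambda^{-1})]$ gives
\begin{equation*}
\boldsymbol{\chi}(\lambda) = (1, \chi_0^0) + (\chi_1^\infty, \chi_1^0)\,\lambda^{-1} + O(\lambda^{-2}), \quad \lambda \to \infty,
\end{equation*}
\begin{equation*}
\boldsymbol{\chi}(\lambda) = (\chi_0^0, 1) + (\chi_1^0, \chi_1^\infty)\,\lambda + O(\lambda^{2}), \quad \lambda \to 0,
\end{equation*}
which is exactly the claimed form, the value $\chi_0^\infty = 1$ being forced by the normalization $\chi_1 \to 1$ at $\infty$. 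Finally, the coefficients entering (\ref{eq:KBRHP}) are the very ones appearing in Corollary 2 specialized to $a = -1$, $\sigma = 1$: here $\chi_0^0 = \chi(0)$ is the leading term of the second component at $\infty$, and $\chi_1^\infty$ is the subleading coefficient of the first, with $(\chi_1^\infty)_x = \frac{1}{\pi}\int_\Gamma f_x\, ds$ matching (\ref{eq:solcont}). Substituting these into the formulas of that corollary gives $\varphi = -\log(\chi_0^0)$ together with the stated expression for $\eta$, and the same corollary guarantees that the resulting $\varphi, \eta$ solve the Kaup--Broer system.

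I expect the genuine work to lie in the step I have treated as ``already exhibited,'' namely verifying directly that $[\chi(\lambda), \chi(\lambda^{-1})]$ satisfies the matrix jump (\ref{eq:jump}). This amounts to writing the two coupled scalar jumps --- one for $\chi(\lambda)$ across $\Gamma$ and one for $\chi(\lambda^{-1})$ across the image contour --- as a single linear relation between $\boldsymbol{\chi}_+$ and $\boldsymbol{\chi}_-$ and then solving it; the inversion of that $2 \times 2$ system is precisely what produces the $1 - R_1(s)R_1(s^{-1})$ denominators, and it is exactly here that the subtle case $R_1(s) \ne 0$ and $R_1(s^{-1}) \ne 0$ simultaneously must be handled with care. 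Once that reconciliation and the uniqueness argument are in place, the remainder is bookkeeping.
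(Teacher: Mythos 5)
Your proposal is correct and follows essentially the same route as the paper, which offers no separate proof: it simply exhibits $\boldsymbol{\chi}(\lambda)=[\chi(\lambda),\chi(\lambda^{-1})]$ built from the normalized solution of (\ref{eq:dbarint}), reads off the two Taylor expansions from (\ref{eq:coeffs1p1}), and appeals to Corollary 2 (your added uniqueness sketch via $\det V=1$ is extra, and a full vanishing-lemma argument would need more than unimodularity, but the proposition as stated does not require it). One remark: your derivation yields $\eta=-\tfrac{1}{2}\varphi_{xx}-(\chi_1^\infty)_x$, which is the version consistent with (\ref{eq:solcont}), whereas the proposition prints $-(\chi_0^0)_x$ in (\ref{eq:KBRHP}); this appears to be a typo in the paper rather than an error in your argument.
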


\section{Finite Gap Primitive Solutions}

In this section we construct a family of algebro-geometric finite gap solutions to the Kaup--Broer system that can be constructed as primitive solutions. We thus end up with an effective way to compute sequences of N-soliton solutions converging to these algebro-geometric finite gap solutions.
These are the algebro-geometric finite gap solutions that can be thought of as nonlinear superpositions of counter propagating solutions to the KdV equation.
We work these solutions out in detail for the case $a=-1$ and $\sigma=1$.
This construction is analogous to the construction for the KdV equation appearing in \cite{N19}.

We can consider the hyperelliptic projective curve $\Sigma'$ defined by 
\begin{equation} w^2=P_{4g}(\lambda),  \quad P_{4g}(\lambda)= \prod_{n = 1}^{2g_\ell} (\lambda-\eta_{n}^{-1})(\lambda-\eta_{n}) \prod_{m = 1}^{2g_r}(\lambda+\xi_{m})(\lambda+\xi_{m}^{-1}).  \end{equation}
This curve has genus $g'=2(g_\ell+g_{r})-1$.
Consider the involution
\begin{equation} \iota(\lambda,w)=(\lambda^{-1},-w) \end{equation}
and define the curve $\Sigma = \Sigma'/\left<\iota \right>$ where $\left<\iota \right>$ is the group $\{id.,\iota\}$ which acts on $\Sigma$.
We put the branch cuts of $\sqrt{P_{4g}(\lambda)}$ on $[\eta_{2n-1},\eta_{2n}]$, $[\eta_{2n}^{-1},\eta_{2n-1}^{-1}]$ for $n = 1,\dots,g_\ell$ and $[-\xi_{2n},-\xi_{2n-1}]$, $[-\xi_{2n-1}^{-1},-\xi_{2n}^{-1}]$ for $n = 1,\dots,g_{r}$.
One sheet of $\Sigma'$ gives a coordinate $\lambda \in \mathbb{C} \setminus \tilde \Gamma$ where
\begin{equation} \tilde \Gamma = \bigcup_{n = 1}^{g_\ell} [\eta_{2n-1},\eta_{2n}] \cup [\eta_{2n}^{-1},\eta_{2n-1}^{-1}] \cup \bigcup_{m = 1}^{g_r} [-\xi_{2m},-\xi_{2m-1}] \cup [-\xi_{2m-1}^{-1},-\xi_{2m}^{-1}] \end{equation}
that covers all of $\Sigma$ except $g$ circles.
We will not need to make reference to the points on the interiors of the branch cuts, however the branch points will be important.
If $\lambda_0$ is a branch point of $\sqrt{P_{4g}(\lambda)}$, then $\lambda_0^{-1}$ is also a branch point of $\sqrt{P_{4g}(\lambda)}$.
Moreover, $\lambda_0$ and $\lambda_0^{-1}$ correspond to the same point on $\Sigma$, and we use the notation $\left<\lambda_0\right>$ to refer to this point.
Since $\Sigma'$ is a double covering of $\Sigma$ it follows from the Euler characteristic that $\Sigma$ has genus $g = g_\ell+g_r$.

We will now compute a basis of abelian differentials of the first kind on $\Sigma$ as follows:
Consider the basis
\begin{equation}  \alpha_n' = \frac{\lambda^{n-1} d \lambda}{\sqrt{P_{4g}(\lambda)}}  \end{equation}
for $j={1,2,\dots,2g-1}$ of Abelian differential of the second kind on $\Sigma'$.
The involution $\iota$ acts on the above basis of  abelian differential of the first kind by $\iota^*$ as
\begin{equation} \iota^* \alpha_n'= \frac{\lambda^{2g-n-1} d\lambda}{\sqrt{\lambda^{4g}P_{4g}(\lambda^{-1})}} = \frac{\lambda^{2g-n-1}d\lambda}{\sqrt{P_{4g}(\lambda)}} = \alpha_{2g-n}'\end{equation}
where it is easy to verify that $\lambda^{4g}P_{4g}(\lambda^{-1})$ because each root of $P_{4g}(\lambda)$ is the multiplicative inverse of another root of $P_{4g}(\lambda)$.

The Abelian differentials of the first kind $\alpha_n$ on $\Sigma$ that can be represented as 
\begin{equation} \alpha_n=\alpha_n'+\alpha_{2g-n}'=\frac{\lambda^{n-1}+\lambda^{2g-n-1}}{\sqrt{P_{4g}(\lambda)}} d\lambda \end{equation}
for $n=1,2,\dots,g$ form a basis.
Let $a_j$ and $b_j$ for $j=1,2,\dots,g$ be a canonical homology basis for the first homology group $H_1(\Sigma)$ satisfying $a_i \circ b_j = \delta_{ij}$, $a_i \circ a_j =0$ and $b_i\circ b_j=0$.
We can then form a basis of normalized abelian differentials of the first kind $\omega_n$ satisfying
\begin{equation} \int_{a_j} \omega_i = 2 \pi i \delta_{ij} \end{equation}
as linear combinations of $\alpha_n$.

The normalized basis of abelian differentials of the first kind allows us to define the Able map $\mathbf{A}$ with entries
\begin{equation} A_n(\lambda)= \int_\infty^\lambda \omega_n  \end{equation}
mapping $\Sigma$ into the Jacobi variety.
The Riemann matrix for $\Sigma$ is
\begin{equation} B_{ij} = \int_{b_j} \omega_i .\end{equation}
The vector of Riemann constants $\mathbf{K}$ has entires
\begin{equation} K_j =  \frac{2 \pi i + B_{jj}}{2} - \frac{1}{2 \pi i} \sum_{\ell \ne j} \int_{a_{\ell}} A_j(\lambda) \omega_{\ell}. \end{equation}

We can also define the abelian differentials of the second kind $\omega^{(n)}$ with asymptotic behavior $\omega^{(n)}=(n \lambda^{n-1}+O(1)) d \lambda$ as $\lambda \to \infty$, asymptotic behavior $\omega^{(n)}=(n \lambda^{-n-1}+O(1)) d\lambda$ as $\lambda \to 0$, and
\begin{equation} \int_{a_j} \omega^{(n)}=0. \end{equation}
These differentials are given by
\begin{equation} \omega^{(n)} = \frac{n(\lambda^{2g+n-1} + \lambda^{-n-1})}{\sqrt{P_{4g}(\lambda)}} d \lambda+\sum_{j = 1}^g c_j^{(n)} \omega_j \end{equation}
where
\begin{equation} c_j^{(n)}=\frac{i}{2 \pi} \int_{a_j}\frac{n(\lambda^{2g+n-1} + \lambda^{-n-1})}{\sqrt{P_{4g}(\lambda)}} d \lambda.  \end{equation}
The integrals of these differentials satisfy
\begin{equation} \int_\infty^\lambda \omega^{(n)}= \lambda^n+O(1), \text{ as }\lambda \to \infty; \quad \int_\infty^\lambda \omega^{(n)}= -\lambda^{-n}+O(1), \text{ as } \lambda \to 0. \end{equation}
An important aspect of $\omega^{(n)}$ is the vector $\boldsymbol{\Omega}^{(n)}$ with entries
\begin{equation} \Omega_j^{(n)} =\int_{b_j} \omega^{(n)}, \text{ for } j = 1,2,\dots,g. \end{equation}

Let us pick a degree $g$ divisor $\mathfrak{P}$ as a direct sum of points
\begin{align} &P_1 \in [\eta_1^{-1},\eta_1], \\
&P_j \in [\eta_{2j+1}^{-1},\eta_{2j}^{-1}]\cup[\eta_{2j},\eta_{2j+1}] \text{ for } j = 2,3,\dots,g_\ell, \\
& P_{g_\ell+1} \in [-\xi_1,-\xi_1^{-1}], \\
& P_{g_\ell+j} \in [-\xi_{2j+1},-\xi_{2j}] \cup [-\xi_{2j}^{-1},-\xi_{2j+1}^{-1}] \text{ for } j = 2,3,\dots,g_r \end{align}
This divisor is nonspecial because all points are distinct \cite{BBEIM94,TD13}.

The Baker--Akheizer 2-point function for the Kaup Broer system is the unique meromorphic function $\psi$ on $\Sigma$ such that 
$\psi$ has simple poles on the points of $\mathfrak{P}$ and asymptotic behavior
\begin{equation} \psi(\lambda;x,t) = e^{\tilde \phi(\lambda;x,t)}(1+O(\lambda^{-1})), \lambda \to \infty \text{ and } \psi(\lambda;x,t) = e^{\tilde \phi(\lambda;x,t)}O(1), \lambda \to 0.  \end{equation}
The function $e^{-\tilde \phi(\lambda;x,t)} \psi(\lambda;x,t)$ is holomorphic at $\infty$ and $0$ respectively, and therefore $\psi(\lambda;x,t)$ has the following uniformly convergent expansions:
\begin{equation} \label{eq:expinf} \psi(\lambda;x,t) = e^{\tilde \phi(\lambda;x,t)}\left(1+\sum_{n=1}^\infty \chi_n^\infty(x,t) \lambda^{-n} \right) \end{equation}
as $\lambda \to \infty$ and \begin{equation} \label{eq:exp0} \psi(\lambda;x,t) =  e^{\tilde \phi(\lambda;x,t)} \sum_{n = 0}^\infty \chi_n^0(x,t) \lambda^n\end{equation} as $\lambda \to 0$.

\begin{rmk}
A standard application of the Riemann--Roch theorem tells us that the space of functions satisfying all properties of the Baker--Akheizer function except we allow the more general asymptotic behavior $\psi(\lambda;x,t) = e^{\tilde \phi(\lambda;x,t)}O(1)$, $\lambda \to \infty$ is one dimensional \cite{D81}. \label{rmk:unique}
\end{rmk}

Therefore the Baker--Akheizer 2-point function is unique.
In fact, the Baker--Akheizer function has the explicit formula
\begin{equation} \psi(\lambda;x,t) = \exp\left(\frac{1}{2}\int_\infty^\lambda \omega^{(1)} x + \frac{1}{4} \int_\infty^\lambda \omega^{(2)} t \right) \frac{\theta(\mathbf{A}(\lambda)+\boldsymbol{\Omega}^{(1)} x + \boldsymbol{\Omega}^{(2)} t - \mathbf{A}(\mathfrak{P}) -\mathbf{K};B)\theta( - \mathbf{A}(\mathfrak{P}) -\mathbf{K};B)}{\theta(\mathbf{A}(\lambda) - \mathbf{A}(\mathfrak{P}) -\mathbf{K};B)\theta(\boldsymbol{\Omega}^{(1)} x + \boldsymbol{\Omega}^{(2)} t - \mathbf{A}(\mathfrak{P}) -\mathbf{K};B)}, \end{equation}
where $\theta(\mathbf{z};B)$ is the Riemann theta function
\begin{equation} \theta(\mathbf{z},B) = \sum_{\mathbf{n} \in \mathbb{Z}^g} \exp \left( {\frac{1}{2} \mathbf{n} \cdot B \mathbf{n} + \mathbf{n} \cdot \mathbf{z}} \right). \end{equation}

\begin{rmk} This representation of the Kaup--Broer system spectral curve is different than that of Matveev--Yavor \cite{MY79}.
The spectral curve $\Sigma$ is isomorphic to a spectral curve of the form considered by Matveev--Yavor.
Let us consider the projective curve $\Sigma_{MY}$ defined by
\begin{equation} r^2 = P_{2g+2}(k), \quad P_{2g+2}(k)= (k^2-1) \prod_{n=1}^{g_\ell} (k - \eta_n) \prod_{n=1}^{g_r} (k + \xi_n)=P_{2g+2}(k). \end{equation}
The mapping
\begin{equation} h(\lambda) = (k(\lambda),r(\lambda)), \quad k(\lambda)=\frac{\lambda+\lambda^{-1}}{2}, \quad r = \text{\normalfont sgn} \circ \log(|\lambda|) \sqrt{P_{2g+2}\left(k
(\lambda)\right)} \end{equation}
maps $\Sigma$ bijectively onto $\Sigma_{MY}$.
Therefore, $\Sigma_{MY}$ and $\Sigma$ are isomorphic.
If $\pi$ is the projection $\pi(k,r)=k$ then $\pi(h(\mu_n)) \in [\eta_{2n-1},\eta_{2n}]$ and $\pi(h(\mu_{g_\ell+n})) \in [-\xi_{2n},-\xi_{2n-1}]$.
Therefore, our spectral data satisfied the reality and regularity conditions determined by Smirnov \cite{S86}. \end{rmk}

Recall that $\tilde L$ and $\tilde M$ were defined in remark \ref{rmk:linsysred}.
We set the potentials in $\tilde L$ and $\tilde M$ according to the (\ref{eq:potentials}) except we use coefficients from (\ref{eq:expinf},\ref{eq:exp0}) and the $u$ and $v$ derivatives reduce to $x$ derivatives.
The functions $e^{-\tilde \phi(\lambda;x,t)}\tilde L\psi(\lambda)$ and $e^{-\tilde \phi(\lambda;x,t)}\tilde M\psi(\lambda)$ are regular at $0$ and satisfy the asymptotic conditions  $e^{-\tilde \phi(\lambda;x,t)} \tilde L\psi(\lambda) \to 0$ and $e^{-\tilde \phi(\lambda;x,t)} \tilde M\psi(\lambda) \to 0$, $\lambda \to \infty$.
Moreover, it can be verified that $\tilde L\psi(\lambda)$ and $\tilde M \psi(\lambda)$ satisfy all the properties of the Baker--Akheizer function except we allow the more general asymptotic behavior $\psi(\lambda;x,t) = e^{\tilde \phi(\lambda;x,t)}O(1)$, $\lambda \to \infty$.
This means $\tilde L\psi =0$ and $\tilde M\psi=0$ by remark \ref{rmk:unique}.  
Remark \ref{rmk:linsysred} then leads to the following proposition:
\begin{prop}
We can construct solutions to the Kaup--Broer system using asymptotic expansion (\ref{eq:expinf},\ref{eq:exp0}) by taking
\begin{equation} \varphi(x,t) = -\log \left(\chi_0^0(x,t) \right), \quad \eta(x,t) = - \frac{1}{2} \varphi_{xx}(x,t) - (\chi_0^0)_x(x,t). \end{equation}
\end{prop}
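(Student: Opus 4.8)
The plan is to obtain the proposition as the assembly of two ingredients already in hand: remark \ref{rmk:linsysred}, which asserts that whenever $\psi = e^{\tilde\phi}\chi$ solves the linear pair $\tilde L\psi = 0$, $\psi_t = \tilde M\psi$ with the potentials $A,B,F,G$ read off from the coefficients of $\chi$ via (\ref{eq:potentials}), the compatibility of that pair is exactly the Kaup--Broer system with $a=-1$, $\sigma = 1$; and the recovery formulas of corollary 2, which, specialized to $a=-1$, $\sigma=1$, give $\varphi = -\log(\chi_0^0)$ and $\eta = -\tfrac12\varphi_{xx} - (\chi_0^0)_x$, the same formulas already used for the Riemann--Hilbert proposition. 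Thus the whole content reduces to checking that the finite--gap Baker--Akheizer function $\psi$, with potentials set from the convergent expansions (\ref{eq:expinf}, \ref{eq:exp0}), genuinely solves the linear pair.

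First I would verify $\tilde L\psi \equiv 0$. Since $\tilde L$ is second order in $x$ alone and the poles of $\psi(\,\cdot\,;x,t)$ in $\lambda$ sit at the fixed divisor $\mathfrak{P}$ (the $x,t$-dependence in the theta representation lives in the numerator and does not move the zeros of the denominator), $\tilde L\psi$ is again meromorphic on $\Sigma$ with at worst simple poles on $\mathfrak{P}$. It remains to control the two distinguished points: expanding $e^{-\tilde\phi}\tilde L\psi$ with (\ref{eq:expinf}, \ref{eq:exp0}) against the explicit $\lambda$-dependence of the coefficients of $\tilde L$ (which carry $\tfrac{\lambda - a\lambda^{-1}}{2}$ and $\bigl(\tfrac{\lambda + a\lambda^{-1}}{2}\bigr)^2$), the choice (\ref{eq:potentials}) is precisely what annihilates the growing and constant terms, leaving $e^{-\tilde\phi}\tilde L\psi$ regular at $\lambda = 0$ and $O(\lambda^{-1})$ at $\lambda = \infty$. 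Hence $\tilde L\psi$ meets every defining property of the Baker--Akheizer function except with the relaxed growth $e^{\tilde\phi}O(1)$ at $\infty$, so by the Riemann--Roch count of remark \ref{rmk:unique} it lies in the one-dimensional space spanned by $\psi$, and its vanishing leading coefficient forces $\tilde L\psi \equiv 0$. The time equation $\psi_t = \tilde M\psi$ follows verbatim by applying the same argument to $\psi_t - \tilde M\psi$: both $\partial_t$ and the first order operator $\tilde M$ preserve the fixed pole divisor, and (\ref{eq:potentials}) again cancels the leading terms, placing $\psi_t - \tilde M\psi$ in the same one-dimensional space with zero leading coefficient.

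With the linear pair established, remark \ref{rmk:linsysred} immediately gives that $\varphi$ and $\eta$ built from $\chi_0^0$ solve the Kaup--Broer system, which completes the proof. I expect the only substantive work to be the asymptotic bookkeeping of the middle paragraph, namely confirming that (\ref{eq:potentials}) simultaneously cancels the leading behaviour of $e^{-\tilde\phi}\tilde L\psi$ and $e^{-\tilde\phi}(\psi_t - \tilde M\psi)$ at both $\lambda = \infty$ and $\lambda = 0$. This is exactly the coefficient matching that originally fixed the potentials in the dressing method, now re-run with the globally convergent Baker--Akheizer series of (\ref{eq:expinf}, \ref{eq:exp0}) in place of the $\bar\partial$-expansions; the decisive structural input, which converts ``small at $\infty$ and regular at $0$'' into ``identically zero,'' is the uniqueness supplied by remark \ref{rmk:unique}.
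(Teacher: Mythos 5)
Your proposal matches the paper's own argument: the paper likewise checks that $e^{-\tilde\phi}\tilde L\psi$ and $e^{-\tilde\phi}\tilde M\psi$ are regular at $0$ and vanish as $\lambda\to\infty$, observes that these functions satisfy every Baker--Akheizer property except with the relaxed growth $e^{\tilde\phi}O(1)$ at $\infty$, invokes the one-dimensionality from remark \ref{rmk:unique} to conclude they vanish identically, and then cites remark \ref{rmk:linsysred} to pass to the Kaup--Broer system. Your explicit treatment of $\psi_t - \tilde M\psi$ (rather than the paper's looser ``$\tilde M\psi = 0$'') is a small improvement in precision but not a different method.
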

This argument is analogous to the corresponding argument justifying the $\bar \partial$ dressing method.

\begin{rmk} The $g$ values $\eta_n, \xi_n$ determining $\Sigma$ and the degree $g$ divisor $\mathfrak{P}$ on $\Sigma$ constitute the spectral data for the finite gap solutions.
The positive spectral data corresponds to a left moving finite gap KdV scattering problem, and the negative  spectral data corresponds to a right moving finite gap KdV scattering problem. \end{rmk}

Let us introduce an auxiliary Baker--Akheizer function $\psi^{(aux)}(\lambda)$ set by the zero divisor $\mathfrak{P}$ and pole divisor $\mathfrak{P}_0$ consisting of a direct sum of the points $\left<\eta_{2j-1} \right>$ and $\left<-\xi_{2j-1}\right>$.
This function is given by
\begin{equation} \psi(\lambda;x,t) = \exp\left(\sum_{n = 1}^g\int_\infty^\lambda \omega^{(n)} t_n \right) \frac{\theta(\mathbf{A}(\lambda) - \mathbf{A}(\mathfrak{P}_0) -\mathbf{K};B)\theta( - \mathbf{A}(\mathfrak{P}) -\mathbf{K};B)}{\theta(\mathbf{A}(\lambda) - \mathbf{A}(\mathfrak{P}) -\mathbf{K};B)\theta(- \mathbf{A}(\mathfrak{P}_0) -\mathbf{K};B)} \end{equation}
where the coefficient $t_j$ are determined by solving 
\begin{equation} \Omega \mathbf{t} \equiv \mathbf{A}(\mathfrak{P})-\mathbf{A}(\mathfrak{P}_0) \end{equation}
where $\Omega$ is the matrix with entries $\Omega_{jn} = \Omega_j^{(n)}$, and $\equiv$ is equivalence within the Jacobi variety. 


This construction is analogous to the analogous construction used by Trogdon and Deconinck in studying the KdV equation.
The construction of Trodon and Deconinck was used in \cite{TD13} to compute finite gap primitive solutions to the KdV equation.

\begin{prop}
The matrix $\Omega$ is invertible and the solution $\mathbf{t}$ has real entries $t_n$.
\end{prop}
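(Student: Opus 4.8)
The plan is to prove the two assertions in turn: first that the $g\times g$ matrix $\Omega=(\Omega_j^{(n)})$ is invertible over $\mathbb{C}$, and then that the resulting solution $\mathbf{t}$ of $\Omega\mathbf{t}\equiv\mathbf{A}(\mathfrak{P})-\mathbf{A}(\mathfrak{P}_0)$ may be taken with real entries. For invertibility I would argue by contradiction: suppose $\Omega\mathbf{c}=0$ for some $\mathbf{c}=(c_1,\dots,c_g)\neq 0$ and set $\eta=\sum_{n=1}^{g}c_n\omega^{(n)}$. Each $\omega^{(n)}$ is a differential of the second kind with $\int_{a_j}\omega^{(n)}=0$, so $\eta$ has no residues and vanishing $a$-periods, while the hypothesis $\Omega\mathbf{c}=0$ kills its $b$-periods. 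A meromorphic differential on the compact surface $\Sigma$ with no residues and all periods zero is exact, so $\eta=df$ for a single-valued meromorphic $f$ on $\Sigma$. From the normalizations $\int_\infty^\lambda\omega^{(n)}=\lambda^n+O(1)$ at $\lambda=\infty$ and $\int_\infty^\lambda\omega^{(n)}=-\lambda^{-n}+O(1)$ at $\lambda=0$, the function $f$ is holomorphic on $\Sigma\setminus\{0,\infty\}$ with principal part $\sum_n c_n\lambda^n$ at $\lambda=\infty$ and $-\sum_n c_n\lambda^{-n}$ at $\lambda=0$.

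The crux is then to show no such nonconstant $f$ exists, which I would do by lifting to $\Sigma'$ and exploiting the hyperelliptic involution $J(\lambda,w)=(\lambda,-w)$ together with $\iota$. Pull $f$ back to the $\iota$-invariant function $\tilde f$ on $\Sigma'$; transporting the principal parts across $\iota$ shows that $\tilde f$ has principal parts $\pm\sum_n c_n\lambda^n$ on the two sheets over $\lambda=\infty$ and $\mp\sum_n c_n\lambda^{-n}$ on the two sheets over $\lambda=0$. Consequently $\tilde f+\tilde f\circ J$ has cancelling principal parts at all four points over $\{0,\infty\}$, hence is holomorphic on $\Sigma'$ and therefore constant, so $\tilde f=\mathrm{const}+w\,r(\lambda)$ with $r$ rational (the $J$-odd part). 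Since $\tilde f$ has poles only over $\lambda\in\{0,\infty\}$ and $\lambda=0$ is an ordinary point ($P_{4g}(0)=1$), $r$ must be a Laurent polynomial, and the pole-order bounds $\le g$ at both $0$ and $\infty$, with $w\sim\lambda^{2g}$ at $\infty$ and $w\to\pm1$ at $0$, force $r(\lambda)=r_{-g}\lambda^{-g}$. Matching the leading coefficients of the two principal parts — here using that the normalization $\int_\infty^\lambda\omega^{(n)}=-\lambda^{-n}+O(1)$ at $0$ forces the branch of $\sqrt{P_{4g}}$ on the coordinate sheet to satisfy $\sqrt{P_{4g}(0)}=+1$ — yields $r_{-g}(1+1)=0$, so $r_{-g}=0$, $f$ is constant and $\mathbf{c}=0$, a contradiction. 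Thus $\Omega$ is invertible.

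For the reality of $\mathbf{t}$ I would set up the antiholomorphic involution $\tau(\lambda,w)=(\bar\lambda,\bar w)$, which exists because the branch points occur in the real configuration $\eta_n,\eta_n^{-1},-\xi_m,-\xi_m^{-1}$ so that $P_{4g}$ has real coefficients. Choosing the canonical homology basis $\{a_j,b_j\}$ compatibly with $\tau$ in the standard way for a real hyperelliptic curve with all branch points on $\mathbb{R}$, the differentials $\omega_j$ and $\omega^{(n)}$ acquire conjugation symmetry under $\tau$; together with $\int_{a_j}\omega_i=2\pi i\,\delta_{ij}$ this forces the Riemann matrix $B$ and each frequency vector $\boldsymbol{\Omega}^{(n)}$ to be real, so $\Omega$ is a real matrix. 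Because $\mathfrak{P}$ and $\mathfrak{P}_0$ are supported on the real ovals (the $P_j$ lie in the real bands, and $\mathfrak{P}_0$ is a sum of real branch points $\langle\eta_{2j-1}\rangle,\langle-\xi_{2j-1}\rangle$), they are $\tau$-invariant divisors, whence $\mathbf{A}(\mathfrak{P})-\mathbf{A}(\mathfrak{P}_0)$ is real modulo the real part of the period lattice. Inverting the real invertible matrix $\Omega$ then produces $\mathbf{t}\in\mathbb{R}^g$.

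The hard part will be the reality step rather than the invertibility: fixing the $\tau$-compatible homology basis and tracking precisely how the Abel map of the real divisors $\mathfrak{P},\mathfrak{P}_0$ sits relative to the period lattice, so that $\mathbf{A}(\mathfrak{P})-\mathbf{A}(\mathfrak{P}_0)$ is genuinely real modulo \emph{real} periods, is the delicate bookkeeping, and it is exactly where the reality and regularity conditions (the Smirnov conditions noted in the preceding remark) must be invoked. By contrast, the invertibility is clean once the $\iota$/$J$ involution trick is in place, its only subtle point being the sign $\sqrt{P_{4g}(0)}=+1$ dictated by the chosen normalization of the differentials $\omega^{(n)}$.
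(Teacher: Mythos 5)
Your argument is correct, but the invertibility half follows a genuinely different route from the paper's. The paper proves invertibility constructively via the Riemann bilinear relation: writing $0=\iint_\Sigma\alpha_m\wedge\omega^{(n)}$ and collecting the boundary contributions from small circles about $\lambda=0$ and $\lambda=\infty$ gives the factorization $\Omega=A^{-1}D$, where $A$ is the (always invertible) matrix of $a$-periods of the $\alpha_m$ and $D$ is computed by residues to be triangular with nonzero diagonal entries $D_{nn}$; this yields invertibility together with an explicit formula. You instead run the dual argument: a vector in $\ker\Omega$ would produce a second-kind differential with all periods zero, hence an exact differential $df$ with prescribed principal parts at $\left<0\right>$ and $\left<\infty\right>$, and you rule out such an $f$ by lifting to $\Sigma'$, splitting into even and odd parts under the hyperelliptic involution $J$, and pinning down the odd part to $r_{-g}\lambda^{-g}w$, which the sign condition $\sqrt{P_{4g}(0)}=+1$ on the coordinate sheet forces to vanish. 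Your approach is cleaner conceptually and avoids computing $D$, but it buys no explicit information about $\Omega$, and it hinges on that sign: your stated justification (that the normalization of $\omega^{(n)}$ "forces" the branch) is not quite an argument, since the normalization is a definition; what actually makes it work is that the segment of the positive real axis from $0$ to $+\infty$ meets an even number ($2g_\ell$) of branch cuts, so the branch with $\sqrt{P_{4g}(\lambda)}\sim+\lambda^{2g}$ at $\infty$ continues to $+1$ at $0$ — you should say this explicitly, since $\sqrt{P_{4g}(0)}=-1$ would render the final cancellation vacuous. For the reality half your plan coincides in substance with the paper's: the paper simply makes your "$\tau$-compatible basis" concrete by taking $b$-cycles along the real bands and $a$-cycles encircling the cuts, and makes "real modulo real periods" concrete by choosing Abel-map contours lying in the real intervals containing $\mathfrak{P}$ and $\mathfrak{P}_0$; as you anticipate, that explicit bookkeeping is exactly the content you have deferred, so this half of your write-up is a correct outline rather than a complete proof. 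Finally, note that both arguments (yours at the step "$\eta$ has no residues", the paper's at the step discarding the regular part of $\omega^{(n)}$ in $D_{mn}$) rely on the $\omega^{(n)}$ genuinely being of the second kind, which is part of their definition but deserves a check against the explicit formula since $\lambda=0,\infty$ are not branch points here.
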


\begin{proof}

The proof of this theorem is similar to the analogous proof for the KdV equation given in \cite{TD13}.
The proof of invertibility is based off arguments presented in \cite{D09}.

Let $\epsilon < \min\{\eta_1,\xi_1,\eta_{2g_\ell}^{-1},\xi_{2g_r}^{-1}\}$.
Let $C_0$ be the circle of radius $\epsilon$ centered at $0$ oriented counterclockwise, and let $C_\infty$ be the circle of radius $\epsilon^{-1}$ cantered at $0$ oriented clockwise.
Let $D_0$ be the disc with boundary $C_0$ containing $0$, and let $D_\infty$ be the disc with boundary $C_\infty$ containing $\infty$.
Inversion maps $C_0$ into $C_\infty$ and vise versa.
Let us choose the $a_j$, $b_j$ so that they avoid $D_0$ and $D_\infty$, and let $\lambda_0$ be a point on $\Sigma$ that is in neither $D_0$ nor $D_\infty$ and avoid $a_j$, $b_j$.
Using stokes theorem and the fact primitives of abelian differentials are single valued on $\Sigma \setminus \bigcup_j a_j \cup b_j$, we can compute \cite{D09}
\begin{equation}0= \iint_{\Sigma} \alpha_m \wedge \omega^{(n)} = \sum_{j = 1}^g \int_{a_j} \alpha_m \int_{b_j} \omega^{(n)} - \int_{C_0}\left( \int_{\lambda_0}^\lambda \alpha_m \right) \omega^{(n)} -\int_{C_\infty}\left( \int_{\lambda_0}^\lambda \alpha_m \right) \omega^{(n)}, \end{equation}
or equivalently
\begin{equation}\Omega_{jn}= \sum_{j = 1}^g A_{jm}^{-1} D_{mn} \end{equation}
where
\begin{equation} A_{mj} = \int_{a_j} \alpha_m, \quad  D_{mn} = \int_{C_0}\left( \int_{\lambda_0}^\lambda \alpha_m \right) \omega^{(n)} + \int_{C_\infty}\left( \int_{\lambda_0}^\lambda \alpha_m \right) \omega^{(n)}. \end{equation}
It is well known that a matrix of a-periods of any basis of abelian differentials of the first kind on a Riemann surface is invertible, therefore if $D_{mn}$ is invertible then $\Omega_{jn}$ is invertible.

The first integral appearing in $D_{mn}$ can be computed using the calculus of residues as
\begin{align} \int_{\tilde C_0}\left( \int_{\lambda_0}^\lambda \alpha_m \right) \omega^{(n)} & = \int_{C_0} \int_{\lambda_0}^\lambda \frac{\lambda_1^{m-1}+\lambda_1^{2g-m-1}}{\sqrt{P_{4g}(\lambda_1)}} d\lambda_1 \frac{n\lambda^{-n-1}}{\sqrt{P_{4g}(\lambda)}} d \lambda \\
& = \frac{2\pi i}{(n-1)!} \left. \frac{d^{n-1}}{d\lambda^{n-1}}  \frac{\lambda^{m-1}+\lambda^{2g-m-1}}{\sqrt{P_{4g}(\lambda)}}\right|_{\lambda = 0}   \end{align}
where we only need to keep the singular part of $\omega^{(n)}$ because the integral involving the regular part is 0.
We have made use of the fact that $P_{4g}(0)=1$.
The second integral appearing in $D_{mn}$ is seen to be equal to the first because it is represented in the $z$-plane as
\begin{align} \int_{ C_\infty}\left( \int_{\lambda_0}^\lambda \alpha_m \right) \omega^{(n)} & = -\int_{C_0} \int_{P_0}^{z^{-1}} \frac{\lambda_1^{n-1}+\lambda_1^{2g-n-1}}{\sqrt{P_{4g}(\lambda_1)}} d\lambda_1 \frac{n z^{-n-1}}{\sqrt{P_{4g}(z)}} dz \\
&  = \int_{C_0} \int_{P_0}^{z} \frac{z_1^{n-1}+z_1^{2g-n-1}}{\sqrt{P_{4g}(z_1)}} dz_1 \frac{n z^{-n-1}}{\sqrt{P_{4g}(z)}} dz. \end{align}
Therefore,
\begin{equation}D_{mn} = \frac{4 \pi i}{(n-1)!} \left. \frac{d^{n-1}}{d\lambda^{n-1}}  \frac{\lambda^{m-1}+\lambda^{2g-m-1}}{\sqrt{P_{4g}(\lambda)}}\right|_{\lambda = 0}. \end{equation}
We see that $D_{mn} = 0$ for $n<m$, $D_{nn} = 4\pi i/(n-1)!$ for $n =1,2,\dots,g-1$ and $D_{gg}=8\pi i/(g-1)$.
The matrix $D_{mn}$ is invertible because it is triangular with nonzero entries on the diagonal.

We will prove that $t_n$ are real by choosing a canonical homology basis $a_j$ and $b_j$ so that $\Omega$ is real, and choosing contours for the Abel map so that $\mathbf{A}(\mathfrak{P})-\mathbf{A}(\mathfrak{P}_0)$ is real.

Let $b_1$ be the contour traversing $[\eta_1^{-1},\eta_1]$ oriented from right to left.
Let $b_n$ for $n = 2,3,\dots,g_\ell$ be the contour traversing $[\eta_{2n-2},\eta_{2n}]$ and then traversing $[\eta_{2n}^{-1},\eta_{2n-2}^{-1}]$ oriented from right to left (while this is disjoint in the $\lambda$ plane, it is a single connected loop on the Riemann surface $\Sigma$).
Let $b_{g_\ell+1}$ be the contours traversing $[-\xi_1,-\xi_1^{-1}]$ oriented from right to left.
Let $b_{g_\ell+n}$ for $n = 2,3,\dots,g_r$ be the contour traversing $[-\xi_{2n-1},-\xi_{2n-2}]$ and then traversing $[-\xi_{2n-2}^{-1},-\xi_{2n-1}^{-1}]$ oriented from right to left.
Let $a_n$ for $n = 1,2,\dots, g_\ell$ be the contour looping once $[\eta_{2g_\ell}^{-1},\eta_{2n-1}^{-1}]$ counterclockwise that can be deformed down to this interval.
Let $a_{g_\ell+n}$ for $n = 1,2,\dots,g_r$ be the contour looping once around $[-\xi_{2g_r},-\xi_{2n-1}]$ counter clockwise that can be deformed down to this interval.

The integral of $\alpha_j$ around $a_i$ can be written as integrals on the bottom and top of branch cuts of $\sqrt{P_{4g}(\lambda)}$ on which $\sqrt{P_{4g}(\lambda)}$ is purely imaginary. This means that $\int_{a_i} \alpha_j$ are purely imaginary because $\alpha_j$ are purely imaginary on the top and bottom of the branch cuts.
Therefore, the coefficients for expanding the normalized differentials $\omega_j$ in terms of $\alpha_n$ are purely real.
The function $\sqrt{P_{4g}(\lambda)}$ is purely real on $b_i$ so $\omega^{(n)}$ is purely real on $b_i$.
Therefore the entires $\int_{b_i} \omega^{(n)}$ of $\Omega$ are purely real.

We can write
\begin{equation} A_j(\mathfrak{P})-A_j(\mathfrak{P}_0) = \sum_{n = 1}^{g_\ell} \int_{\left<\eta_{2n-1}\right>}^{P_n} \omega_j + \sum_{m=1}^{g_r} \int_{\left<-\xi_{2m-1}\right>}^{P_{g_\ell+m}} \omega_j. \end{equation}
We can choose the contour from $\left<\eta_{2n-1}\right>$ to $P_n$ so that it stays in $[\eta_1^{-1},\eta_1]$ for $n=1$ or $[\eta_{2n-1}^{-1},\eta_{2n-2}^{-1}] \cup[\eta_{2n-2},\eta_{2n-1}]$ for $n=2,3,\dots,g_\ell$.
We can choose the contour from $\left<-\xi_{2m-1}\right>$ to $P_{g_\ell+m}$ so that it stays in $[-\xi_1,-\xi_1^{-1}$ for $m=1$ or $[-\xi_{2m-1},-\xi_{2m-2}] \cup [-\xi_{2m-2}^{-1},-\xi_{2m-1}^{-1}]$ for $m=2,3,\dots,g_{r}$.
The differentials $\alpha_j$ are real on these intervals which means $\omega_j$ are real on these intervals too.
Therefore, $\mathbf{A}(\mathfrak{P})-\mathbf{A}(\mathfrak{P}_0)$ is real for this choice of contours.
\end{proof}

We define the function
\begin{equation} \chi(\lambda;x,t) = \xi(\lambda) e^{- \tilde \phi(\lambda;x,t) - \phi^{(aux)}(\lambda)} \psi^{(aux)}(\lambda;x,t) \psi(\lambda;x,t)  \end{equation}
where
\begin{equation} \xi(\lambda) = \prod_{n = 1}^{g_\ell} \frac{\sqrt[4]{\lambda - \eta_{2n-1}}\sqrt[4]{\lambda - \eta_{2n-1}^{-1}}}{\sqrt[4]{\lambda - \eta_{2n}}\sqrt[4]{\lambda - \eta_{2n}^{-1}}} \prod_{n = 1}^{g_r} \frac{\sqrt[4]{\lambda + \xi_{2n-1}}\sqrt[4]{\lambda + \xi_{2n-1}^{-1}}}{\sqrt[4]{\lambda + \xi_{2n}}\sqrt[4]{\lambda + \xi_{2n}^{-1}}}\end{equation}
where we choose $\xi$ to have branch cuts on $\tilde \Gamma$ and asymptotic behaviors $\xi = 1+O(\lambda^{-1})$ as $\lambda \to \infty$.

\begin{thm}
The vector valued function $\boldsymbol{\chi}(\lambda;x,t)=[\chi(\lambda;x,t),\chi(\lambda^{-1};x,t)]$ solves Riemann--Hilbert problem \ref{rhpKB} with $R_1(s)=e_1(s) f_1(s)$ where
\begin{equation} e_1(s) = \exp\left(-2\sum_{n = 1}^{g} t_n (s^n-s^{-n}) \right)\end{equation}
\begin{equation} f_1(s)=  \sum_{n = 1}^{g_\ell} \left( \mathbbm{1}_{[\eta_{2n-1},\eta_{2n}]}(s) - \mathbbm{1}_{[\eta_{2n}^{-1},\eta_{2n-1}^{-1}]}(s)\right) +  \sum_{m=1}^{g_{r}} \left(   \mathbbm{1}_{[-\xi_{2m-1}^{-1},-\xi_{2m}^{-1}]}(s) - \mathbbm{1}_{[-\xi_{2m},-\xi_{2m-1}]}(s) \right).\end{equation}
These finite gap solutions can alternatively be computed from $R_1(s)$ by solving (\ref{eq:inteqf}) for $f(s,x,t)$ and then computing a solution to the Kaup--Broer system from $f(s,x,t)$ using (\ref{eq:solcont}).
\end{thm}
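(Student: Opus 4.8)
The plan is to verify directly that the vector $\boldsymbol{\chi}(\lambda;x,t)=[\chi(\lambda;x,t),\chi(\lambda^{-1};x,t)]$ assembled from the Baker--Akheizer functions satisfies each of the six conditions of Riemann--Hilbert Problem \ref{rhpKB} with the stated $R_1=e_1 f_1$, and then to invoke uniqueness of the normalized problem to identify the finite gap solution with the primitive solution carrying this spectral data. Two conditions are purely structural. The symmetry (6) is immediate from the definition, since interchanging $\lambda$ and $\lambda^{-1}$ merely swaps the two entries of $\boldsymbol{\chi}$, which is exactly right--multiplication by the permutation matrix. The normalization (5) reduces in both cases to the single claim $\chi(\lambda)\to 1$ as $\lambda\to\infty$: I would combine the asymptotics $\int_\infty^\lambda\omega^{(1)}=\lambda+O(1)$ and $\int_\infty^\lambda\omega^{(2)}=\lambda^2+O(1)$ with the fact that $\tilde\phi$ and $\phi^{(aux)}$ are built precisely so that $e^{-\tilde\phi}\psi$ and $e^{-\phi^{(aux)}}\psi^{(aux)}$ each tend to a constant theta quotient that collapses to $1$ at $\lambda=\infty$ because $\mathbf{A}(\infty)=0$, while $\xi\to 1$ by construction.

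For analyticity (1) I would exploit the divisor bookkeeping: $\psi$ has simple poles exactly on $\mathfrak{P}$ while $\psi^{(aux)}$ has zeros exactly on $\mathfrak{P}$, so the product $\psi^{(aux)}\psi$ is regular there; the only surviving poles come from the pole divisor $\mathfrak{P}_0$ of $\psi^{(aux)}$, located at the branch points $\langle\eta_{2j-1}\rangle,\langle-\xi_{2j-1}\rangle$, and the essential singularities at $0,\infty$ are cancelled by the factors $e^{-\tilde\phi-\phi^{(aux)}}$. Hence $\chi$ is holomorphic off $\tilde\Gamma$. The endpoint condition (3) then follows from a local model near each branch point: a simple pole of $\psi^{(aux)}$ on $\Sigma$ reads as $(\lambda-\lambda_0)^{-1/2}$ in the $\lambda$--chart, and the fourth--root factor $\xi\sim(\lambda-\lambda_0)^{\pm 1/4}$ weakens this to at worst $(\lambda-\lambda_0)^{-1/4}$, which is milder than a pole. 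Continuity of the boundary values (2) away from the endpoints follows from continuity of the theta quotients together with the one--sided continuity of $\sqrt{P_{4g}}$ and of the fourth roots along the cuts.

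The heart of the proof is the jump relation (4). My plan is to track the three sources of multivaluedness across a cut $s\in\tilde\Gamma$ and to use the involution $\iota(\lambda,w)=(\lambda^{-1},-w)$ to express the opposite--sheet value at $s$ in terms of the chosen--sheet value at $s^{-1}$. First I would record the two elementary facts $\tilde\phi(\lambda^{-1})=-\tilde\phi(\lambda)$ and that the boundary--value difference of $\phi^{(aux)}$ across a cut equals $-2\sum_n t_n(s^n-s^{-n})$; the former produces the factors $e^{\pm 2\tilde\phi(s)}$ off the diagonal of $V$, and the latter produces $e_1(s)=\exp\!\big(-2\sum_{n}t_n(s^n-s^{-n})\big)$ as the boundary--value ratio of $e^{-\phi^{(aux)}}$. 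Next I would compute the jump of $\sqrt{P_{4g}}$ (a sign change, hence a sheet interchange) and of the fourth roots in $\xi$ (a factor of $i$), which supply the imaginary units and the indicator signs packaged in $f_1$. Assembling these, the relation between $\chi_\pm(s)$ and $\chi_\pm(s^{-1})$ becomes a linear system; I would then verify the self--consistent identity $R_1(s)R_1(s^{-1})=-1$ on $\tilde\Gamma$, which uses $f_1(s)f_1(s^{-1})=-1$ on each paired cut together with $e_1(s^{-1})=e_1(s)^{-1}$, so that the prescribed $V$ in (\ref{eq:jump}) collapses to the anti--diagonal matrix with entries $iR_1(s^{-1})e^{2\tilde\phi(s)}$ and $-iR_1(s)e^{-2\tilde\phi(s)}$, and confirm that this is precisely the matrix produced by the sheet interchange.

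I expect the branch--tracking in step (4) to be the \emph{main obstacle}: the signs of $\sqrt{P_{4g}}$, the choice of fourth--root branches in $\xi$, and the period ambiguities in $\int_\infty^\lambda\omega^{(n)}$ must all be pinned down consistently, and it is only after this bookkeeping that the clean identity $R_1(s)R_1(s^{-1})=-1$ emerges and guarantees $\det V=1$. Once all six conditions are established, the final sentence of the theorem follows by uniqueness of the normalized Riemann--Hilbert problem: the primitive solution obtained by solving the singular integral equation (\ref{eq:inteqf}) with the same $R_1$ also solves Problem \ref{rhpKB}, so the two solutions coincide and the finite gap fields are recovered from (\ref{eq:solcont}), equivalently from the reconstruction formula (\ref{eq:KBRHP}) via Corollary 2.
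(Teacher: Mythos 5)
Your proposal is correct and follows essentially the same route as the paper: the paper likewise verifies condition (3) by observing that the simple poles of $\psi^{(aux)}$ at the branch points give order $-\tfrac{1}{2}$ singularities in the $\lambda$-chart that the quartic-root factor $\xi$ tempers to order $-\tfrac{1}{4}$, and establishes condition (4) by tracking the sheet-interchange jumps of $\psi^{(aux)}\psi$, $\xi$, and $e^{-\tilde\phi-\phi^{(aux)}}$ separately and then using $R_1(s)R_1(s^{-1})=-1$ to collapse the general jump matrix (\ref{eq:jump}) to the anti-diagonal form. The paper declares the remaining conditions straightforward and leaves them to the reader, so your additional detail on (1), (2), (5), (6) and the closing uniqueness argument only fills in what the paper omits.
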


\begin{proof}
The proof of this theorem is a straight forward verification of the properties of the Riemann--Hilbert problem.
The steps differ only minimally from the proof of the analogous result for the KdV equation appearing in \cite{N19}.
We will verify conditions (3) and (4). The proofs of the other conditions are straight forward, and we leave their verification to the reader.

We will say a singularity has order $-p$ we mean it locally behaves like $z^{-p}$.
The poles of $\tilde \psi(\lambda)$ on $\Sigma$ at $\left<\eta_{2n-1} \right>$ and $\left< -\xi_{2m-1} \right>$ manifest themselves in the $\lambda$ plane as order $-\frac{1}{2}$ singularities at $\eta_{2n-1},\eta_{2n-1}^{-1}$ and $-\xi_{2m-1}, -\xi_{2m-1}^{-1}$.
This means that $\chi$ has order $-\frac{1}{4}$ singularities on the endpoints of $\tilde \Gamma$.
These are less severe than poles verifying condition (3).

Let us write $\tilde\psi(s) = \psi^{(aux)}(s) \psi(s;x,t)$, which is a meromorphic function on $\Sigma \setminus\{0,\infty\}$.
Its representation in the $\lambda$ plane therefore satisfy the nonlocal jump relations
\begin{equation} \tilde\psi_+(s)=\tilde \psi_+(s^{-1}), \quad \tilde \psi_-(s) = \tilde \psi_-(s^{-1})\end{equation}
for $s \in \tilde \Gamma$.
The function $\xi$ satisfies the nonlocal jump relations
\begin{equation} \xi_+(s) = -i\xi_+(s^{-1}), \quad \xi_-(s)=-i\xi_-(s^{-1}) \end{equation}
for $s\in [-\xi_{2m-1}^{-1},-\xi_{2m}^{-1}]$ or $s \in[\eta_{2n-1},\eta_{2n}]$.
Let $\tilde e(s) = e^{-\tilde \phi(s)-\phi^{(aux)}(s)}$, then $\tilde e(s)$ satisfies $\tilde e(s)= e^{-2\tilde \phi(s)-2\phi^{(aux)}(s)} \tilde e(s^{-1})$ for all $s$, and in particular for $s \in \tilde \Gamma$.
Putting these nonlocal jump relations together gives the nonlocal jump conditions
\begin{equation} \chi_+(s;x,t)=-ie^{-2\tilde \phi(s;x,t)} R_1(s) \chi_+(s^{-1};x,t), \quad \chi_-(s^{-1};x,t)=ie^{2\tilde \phi(s;x,t)} R_1(s^{-1}) \chi_-(s;x,t)  \end{equation}
for $s\in \tilde \Gamma$, where we use $R_1(s)=e_1(s)f_1(s)$.
Using $\boldsymbol{\chi}_+(s;x,t) = [\chi_+(s;x,t),\chi_-(s^{-1};x,t)]$ and $\boldsymbol{\chi}_-(s) = [\chi_-(s;x,t),\chi_+(s^{-1};x,t)]$ we see that $\boldsymbol{\chi}_+(s;x,t) = \boldsymbol{\chi}_-(s;x,t) V(s;x,t)$ where
\begin{equation} V(s;x,t) = \begin{pmatrix} 0 & iR_1(s)e^{2\tilde \phi(s;x,t)} \\ -iR_1(s)e^{-2\tilde \phi(s;x,t)} & 0 \end{pmatrix}.  \end{equation}
Using the fact that $R_1(s^{-1})=-R_1(s)^{-1}$ for the choice of dressing function $R_1(s)=e_1(s)f_1(s)$, we see the jump matrix $V(s;x,t)$ is equal to the jump matrix given by (\ref{eq:jump}).
\end{proof}

\section{Numerical Primitive Solutions}

In this section we compute numerical primitive solutions to the Kaup--Broer system determined by the continuum spectral data with $\gamma_1 = \tfrac{9}{8}$, $\gamma_2 = 2$ using various functions $R_0(s)$ satisfying property (I).
These particular solutions appear to describe counter propagating dispersive shockwaves with space time regions that appear to be described by finite gap solutions with various numbers of periods.
We believe that these solutions are locally described by the finite-gap solutions discussed in the previous section.
In producing solutions to the Kaup--Broer system in this section we will make use of the invariance of the Kaup--Broer system with respect to the time reversal symmetry $\eta(x,t) \to \eta(x,-t)$, $\varphi(x,t) \to -\varphi(x,-t)$.

\subsection{Case 1:}
The case of
\begin{equation} R_1 (s) = \pi e^{-10} \left(-\mathbbm{1}_{\left[-\tfrac{7}{4},-\tfrac{5}{4}\right]}(s) + \mathbbm{1}_{\left[\tfrac{3}{2},2\right]}(s)\right)  \end{equation}
where if $I$ is an interval then $\mathbbm{1}_I(s)$ is the indicator function of the interval $I$.
When numerically solving (\ref{eq:inteqf}) and numerically computing the integrals in (\ref{eq:solcont}) we use Gauss-Legendre quadrature with 50 points on each interval $[-\tfrac{7}{4},-\tfrac{5}{4}]$ and $[\tfrac{3}{2},2]$.
The solutions were computed on a uniform space time grid, and the $x$ derivatives appearing in (\ref{eq:solcont}) were computed spectrally with fast Fourier transforms.
The matrix approximation of the integral equation (\ref{eq:inteqf}) has a large condition number in this case, and was solved in Mathematica with 90 digits of precision.
From the solutions $\eta,\varphi$ produced by the numerical approximation to (\ref{eq:solcont}), we define the time reversed solution $\eta_1(x,t) = \eta(x,-t)$, $\varphi_1(x,t) = -\varphi(x,-t)$ to the Kaup---Broer system.
While this is a numerical approximation to (\ref{eq:solcont}), it can also be interpreted as a 100-soliton solution (\ref{eq:soldisc}).

Spacetime plots for the solutions $\eta_1$ and $\varphi_1$ are provided in figures \ref{f11} and \ref{f12}, and spatial plots of these solutions are provided in figures \ref{f13} and \ref{f14}.
At early times we see that the $\eta_1$ component of this solution approximates counter propagating dispersive shockwaves with 1-period trailing waves.
These dispersive shockwaves collide, and when the 1-period trailing waves begin to interact we see what we believe to be described by a 2-period solution to the Kaup--Broer system.

\subsection{Case 2:}

We now consider the case of
\begin{equation} R_1 (s) = \pi e^{-10}\left(-\mathbbm{1}_{\left[-\tfrac{15}{8},-\tfrac{13}{8}\right]}(s) - \mathbbm{1}_{\left[-\tfrac{6}{4},-\tfrac{5}{4}\right]}(s)  + \mathbbm{1}_{\left[\tfrac{11}{8},\tfrac{13}{8}\right]}(s) + \mathbbm{1}_{\left[\tfrac{7}{4},2\right]}(s)\right).  \end{equation}
When numerically solving (\ref{eq:inteqf}) and numerically computing the integrals in (\ref{eq:solcont}) we use Gauss-Legendre quadrature with 25 points on each of the intervals $[-\tfrac{15}{8},-\tfrac{13}{8}]$, $[-\tfrac{6}{8},-\tfrac{5}{4}]$, $[\tfrac{11}{8},\tfrac{13}{8}]$ and $[\tfrac{7}{4},2]$.
The solutions were computed on a uniform space time grid, and the $x$ derivatives appearing in (\ref{eq:solcont}) were computed spectrally with fast Fourier transforms.
The matrix approximation of the integral equation (\ref{eq:inteqf}) has a large condition number in this case, and was solved in Mathematica with 90 digits of precision.
From the solutions $\eta,\varphi$ produced by the numerical approximation to (\ref{eq:solcont}), we define the time reversed solution $\eta_2(x,t) = \eta(x,-t)$, $\varphi_2(x,t) = -\varphi(x,-t)$.
The solution $\eta_2$, $\varphi_2$ can be interpreted as a 100-soliton solution to the Kaup--Broer system.

Spacetime plots for the solutions $\eta_2$ and $\varphi_2$ are provided in figures \ref{f21} and \ref{f22}, and spatial plots of these solutions are provided in figures \ref{f23} and \ref{f24}.
At early times we see that the $\eta_2$ component of this solution approximates counter propagating dispersive shockwaves with trailing waves that we believe to be described by 2-period solutions to the Kaup--Broer system.
When these solutions collide, we initially see what appears to be a region where counter propagating modulated 1-period solutions are interacting.
We then see regions with modulated 1-period solutions interacting with the 2-period trailing waves in a region where we believe the solutions can be described by modulated 3-period solutions to the Kaup--Broer system.
There also appear to be space time regions where the 2-period trailing waves are interacting in what we believe to be described by a 4-period solution to the Kaup--Broer system.

\subsection{Case 3:}

The case of
\begin{equation} R_1 (s) = \pi e^{-10}\left(-\mathbbm{1}_{\left[-2,-\tfrac{15}{8}\right]}(s) - \mathbbm{1}_{\left[-\tfrac{7}{4},-\tfrac{13}{8}\right]}(s) - \mathbbm{1}_{\left[-\tfrac{11}{8},-\tfrac{5}{4}\right]}(s) + \mathbbm{1}_{\left[\tfrac{11}{8},\tfrac{13}{8}\right]}(s) + \mathbbm{1}_{\left[\tfrac{7}{4},2\right]}(s)\right).  \end{equation}
When numerically solving (\ref{eq:inteqf}) and numerically computing the integrals in (\ref{eq:solcont}) we use Gauss-Legendre quadrature with 30 points on each of the intervals $[-2,-\tfrac{15}{8}]$, $[-\tfrac{7}{4},-\tfrac{13}{8}]$,  $[-\tfrac{11}{8},-\tfrac{5}{4}]$, $[\tfrac{11}{8},\tfrac{13}{8}]$ and $[\tfrac{7}{4},2]$.
The solutions were computed on a uniform space time grid, and the $x$ derivatives appearing in (\ref{eq:solcont}) were computed spectrally with fast Fourier transforms.
The matrix approximation of the integral equation (\ref{eq:inteqf}) has a large condition number in this case, and was solved in Mathematica with 100 digits of precision.
From the solutions $\eta,\varphi$ produced by the numerical approximation to (\ref{eq:solcont}), we define the time reversed solution $\eta_3(x,t) = \eta(x,-t)$, $\varphi_3(x,t) = -\varphi(x,-t)$.
The solution $\eta_3$, $\varphi_3$ can be interpreted as a 150-soliton solution to the Kaup--Broer system.

Spacetime plots for the solutions $\eta_3$ and $\varphi_3$ are provided in figures \ref{f31} and \ref{f32}, and spatial plots of these solutions are provided in figures \ref{f33} and \ref{f34}.
This solution has a fairly complicated structure.
We believe this solution has regions described by modulated $g$-period solutions to the Kaup--Broer with $g$ ranging from 0 thru 5 ($g = 0$ corresponding to constant solutions).

\vfill

\pagebreak
\begin{figure}[h]
\center
\includegraphics[width=0.7 \textwidth]{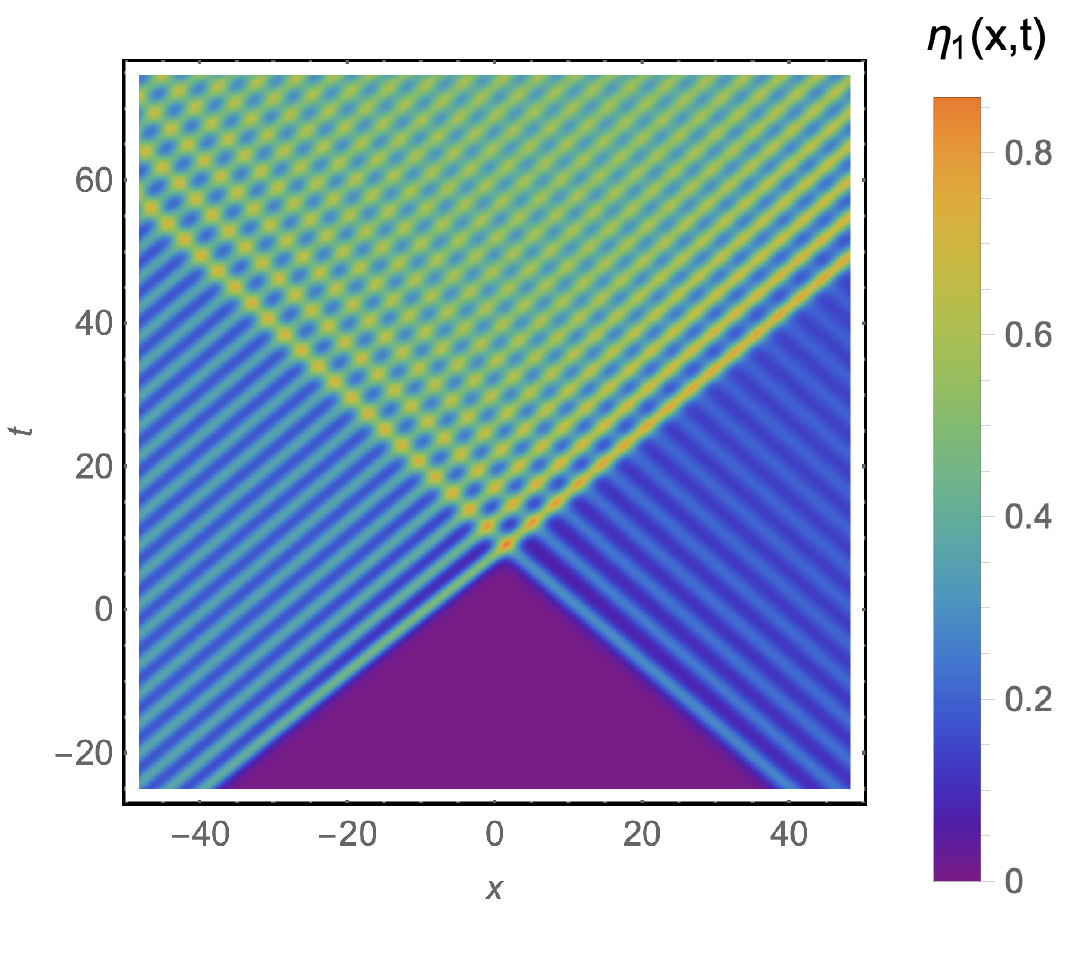}
\caption{A spacetime plot for the field $\eta_1$ for the 100-soliton case 1 solution to the Kaup--Broer system.}
\label{f11}
\end{figure}

\begin{figure}[h]
\center
\includegraphics[width=0.7 \textwidth]{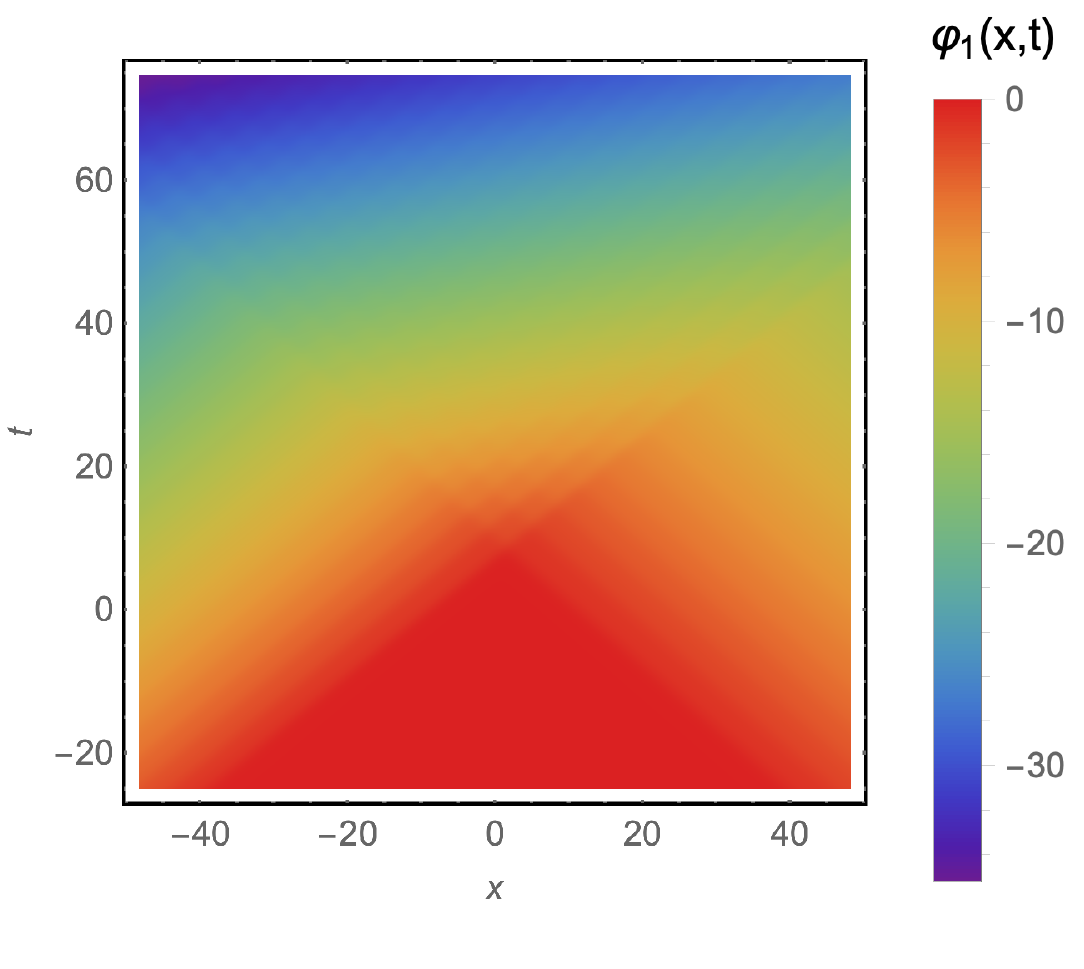}
\caption{A spacetime plot for the field $\varphi_1$ for the case 1 solution to the Kaup--Broer system.}
\label{f12}
\end{figure}

\begin{figure}[h]
\center
\includegraphics[width=0.65 \textwidth]{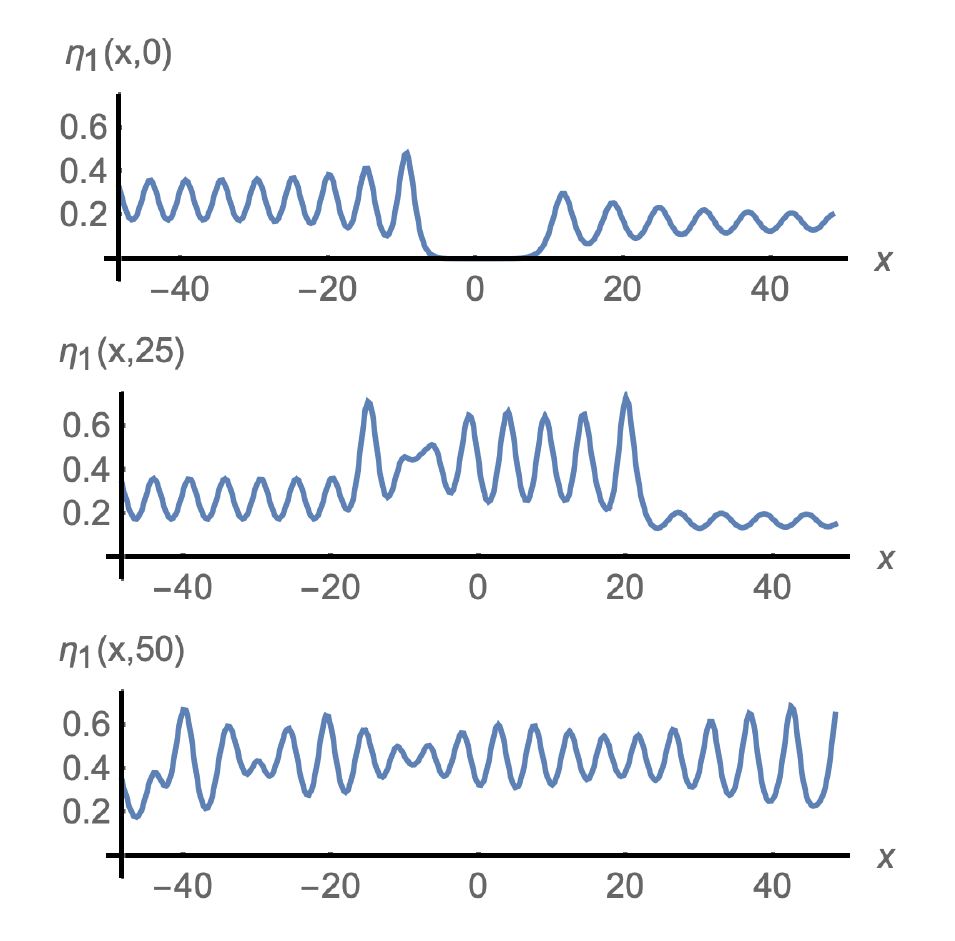}
\caption{Spatial plots for the field $\eta_1$ for the case 1 solution to the Kaup--Broer systems plotted at times $t=0$, $t=25$ and $t = 50$.}
\label{f13}
\end{figure}

\begin{figure}[h]
\center
\includegraphics[width=0.65 \textwidth]{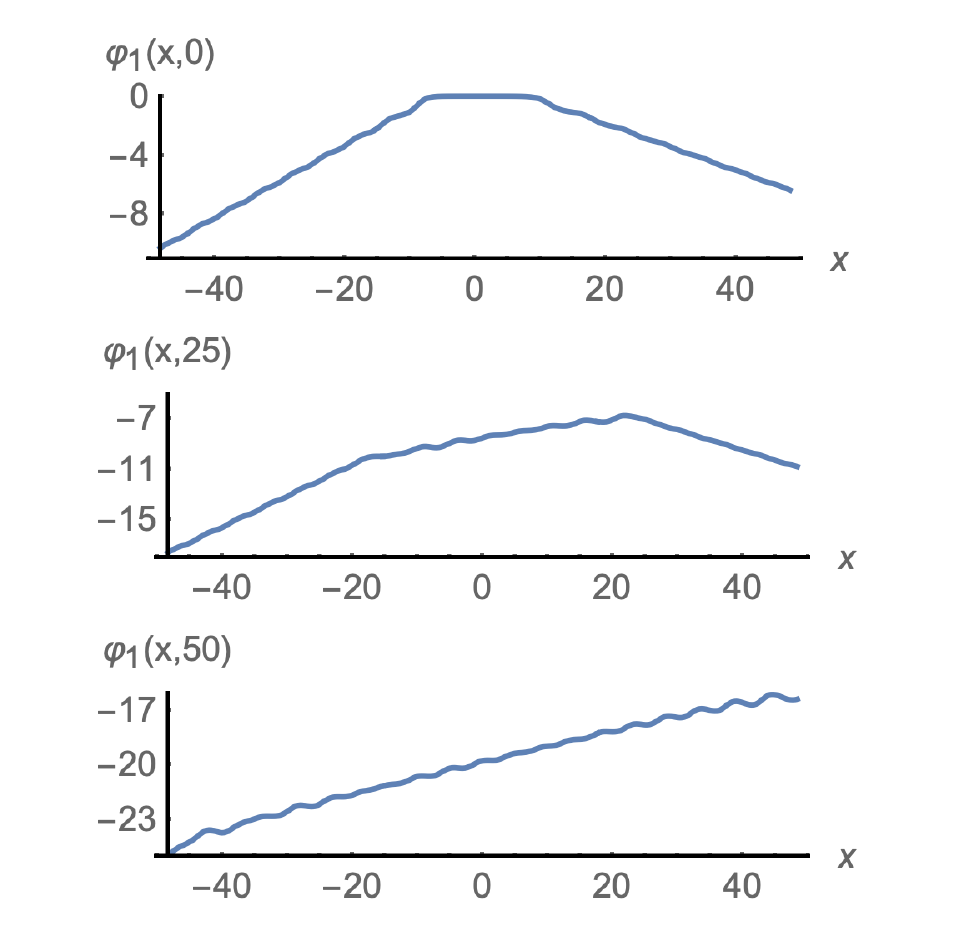}
\caption{Spatial plots for the field $\varphi_1$ for the case 1 solution to the Kaup--Broer systems plotted at times $t=0$, $t=25$ and $t = 50$.}
\label{f14}
\end{figure}

\begin{figure}[h]
\center
\includegraphics[width=0.7 \textwidth]{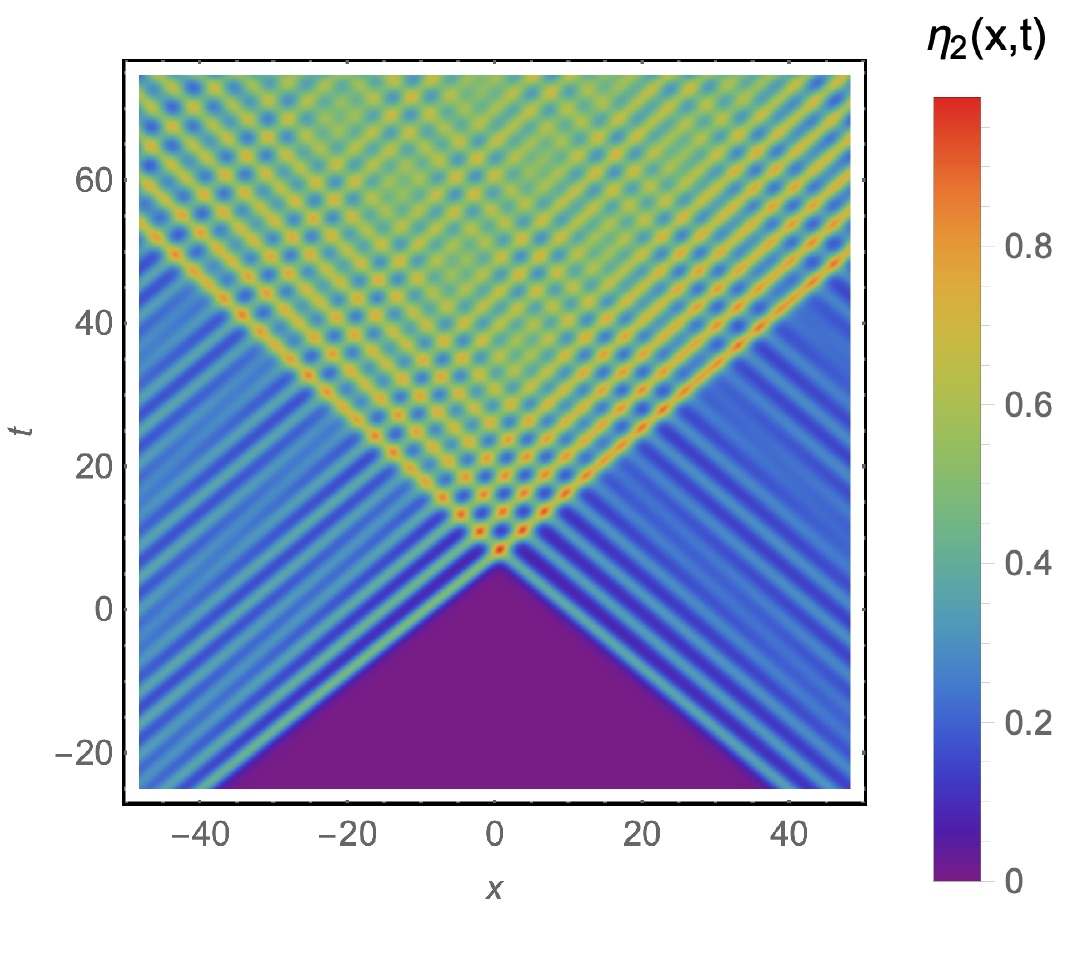}
\caption{A spacetime plot for the field $\eta_2$ for the case 2 solution to the Kaup--Broer system.}
\label{f21}
\end{figure}

\begin{figure}[h]
\center
\includegraphics[width=0.7 \textwidth]{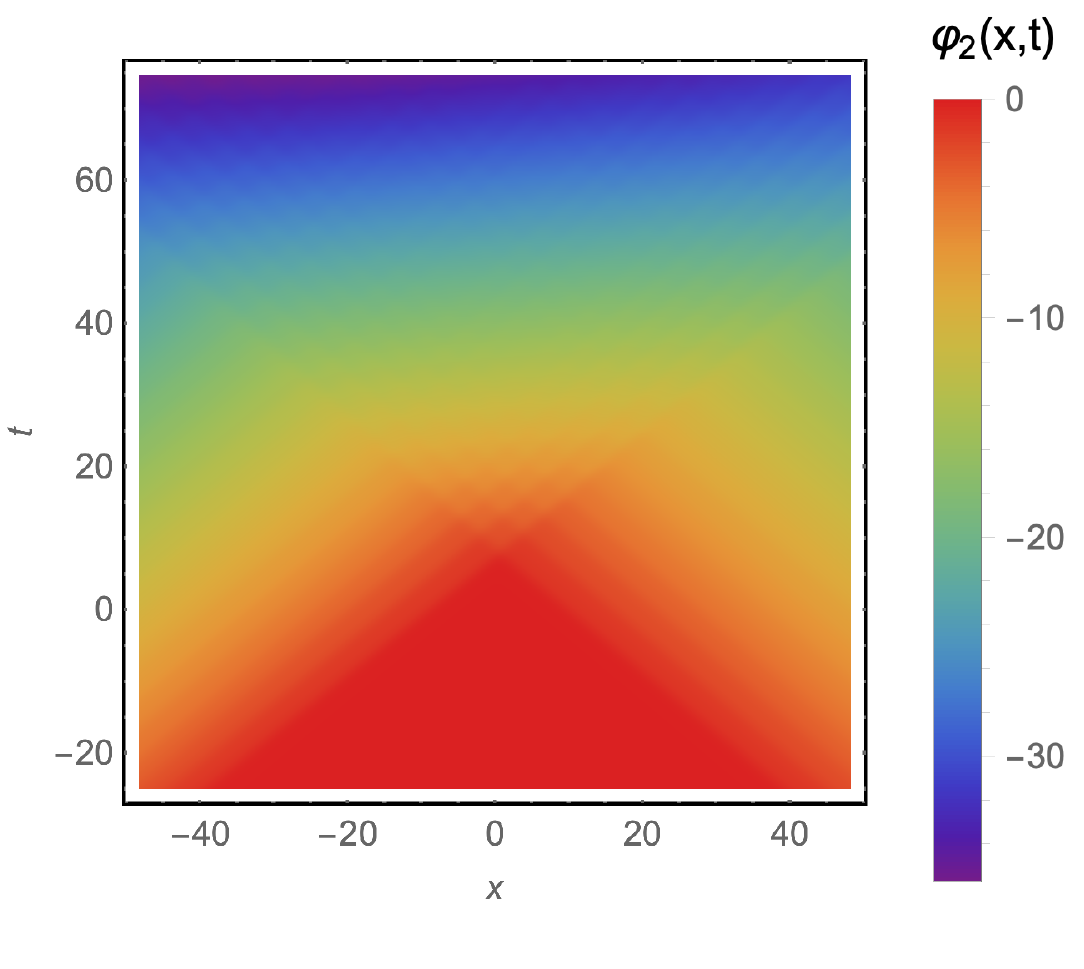}
\caption{A spacetime plot for the field $\varphi_2$ for the case 2 solution to the Kaup--Broer system.}
\label{f22}
\end{figure}

\begin{figure}[h]
\center
\includegraphics[width=0.65\textwidth]{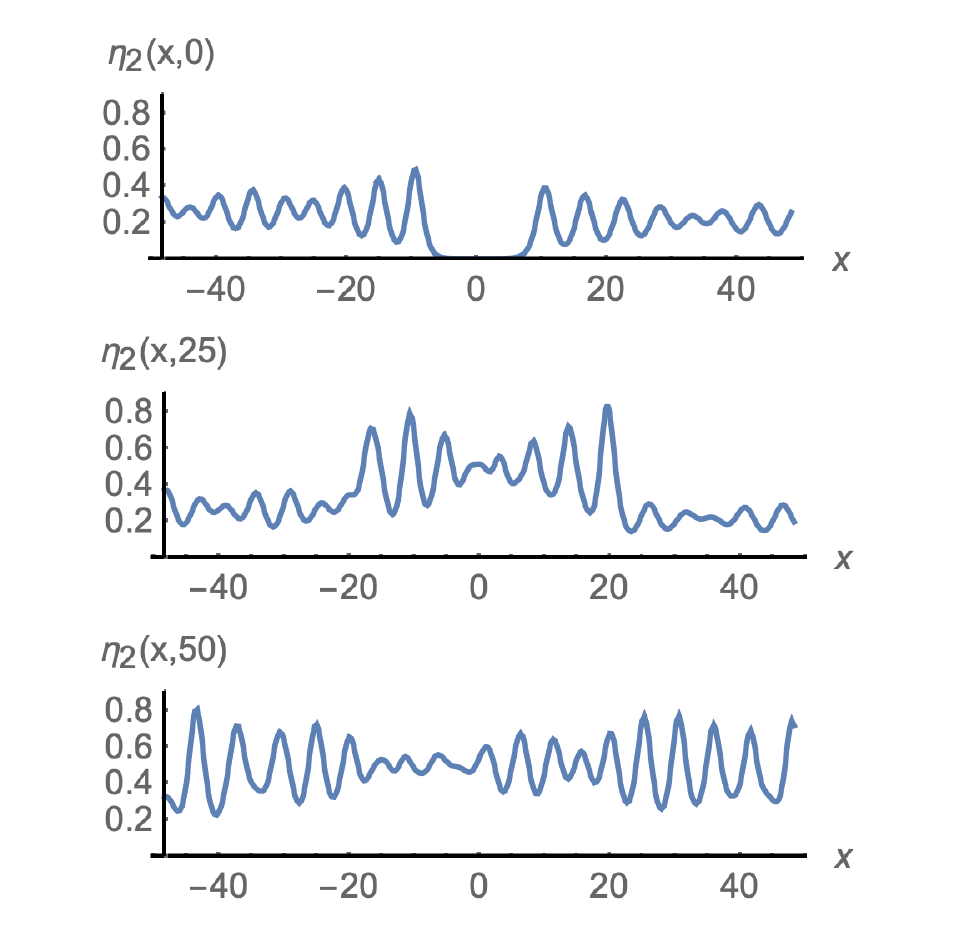}
\caption{Spatial plots for the field $\eta_2$ for the case 2 solution to the Kaup--Broer systems plotted at times $t=0$, $t=25$ and $t = 50$.}
\label{f23}
\end{figure}

\begin{figure}[h]
\center
\includegraphics[width=0.65 \textwidth]{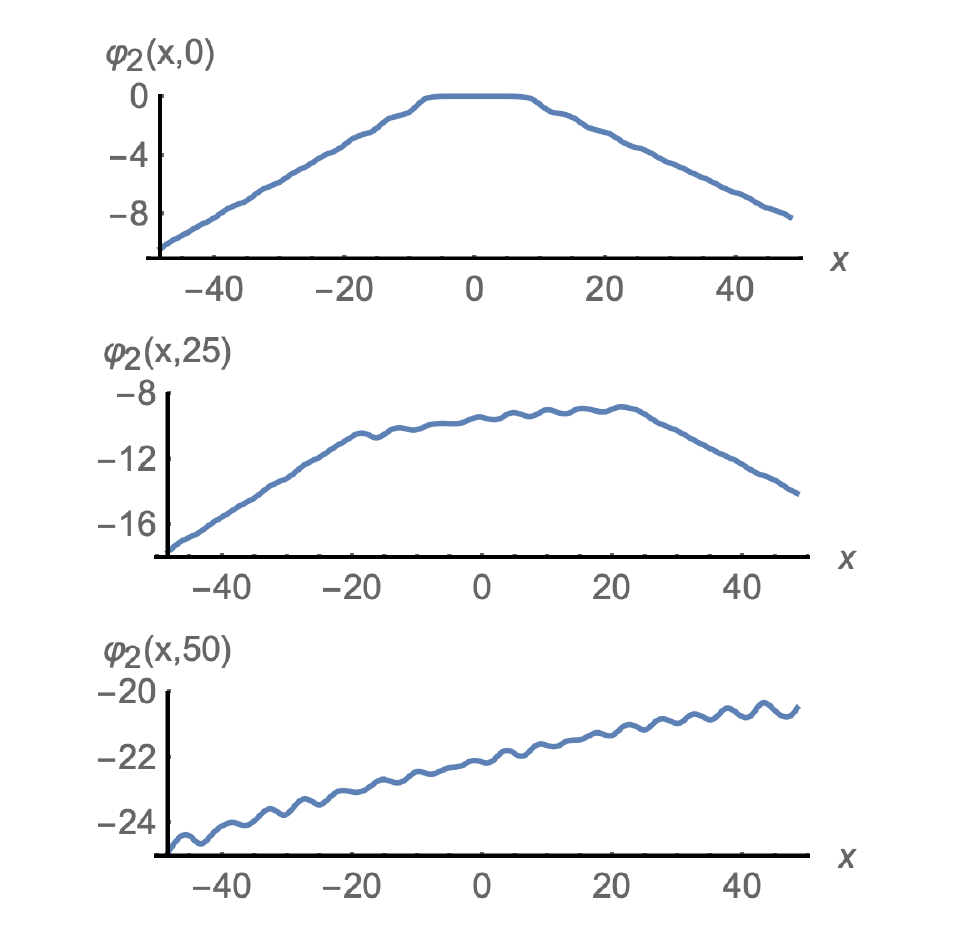}
\caption{Spatial plots for the field $\varphi_2$ for the case 2 solution to the Kaup--Broer systems plotted at times $t=0$, $t=25$ and $t = 50$.}
\label{f24}
\end{figure}

\begin{figure}[h]
\center
\includegraphics[width= 0.7\textwidth]{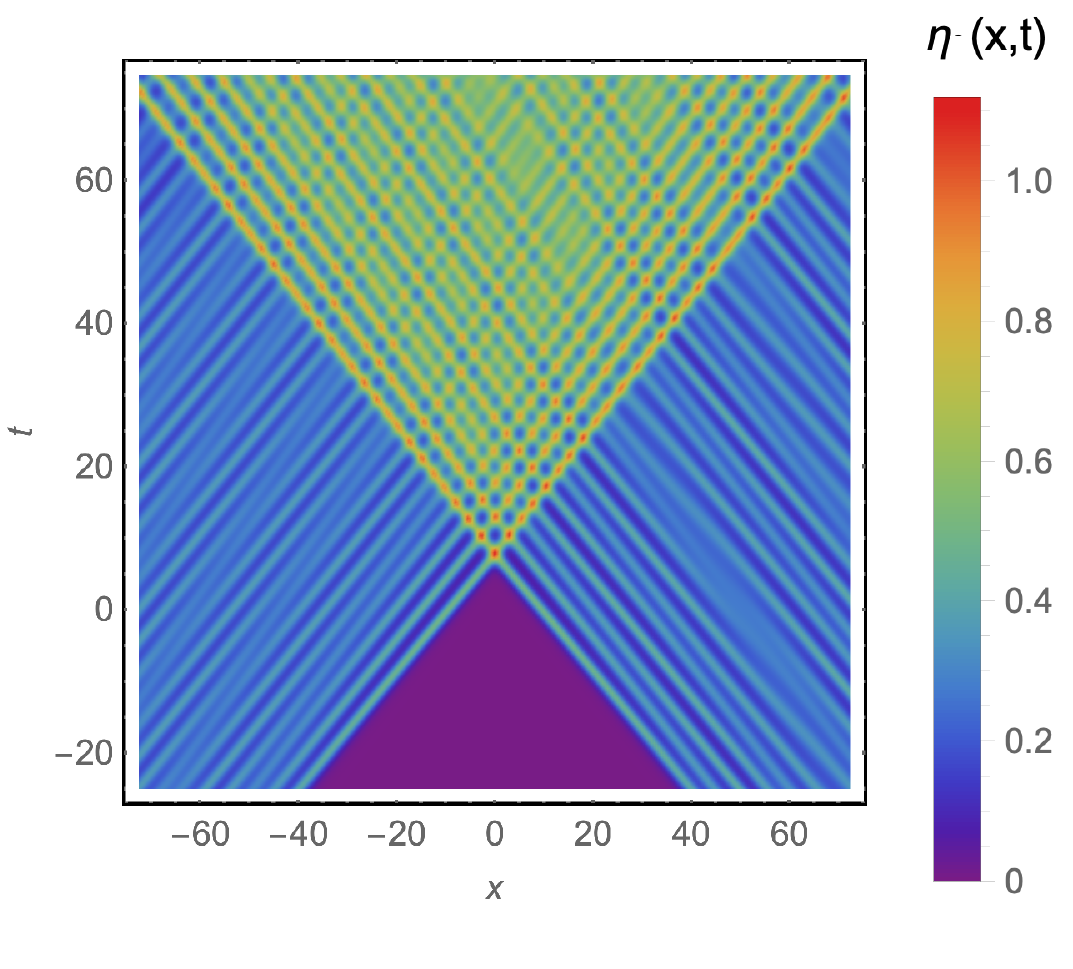}
\caption{A spacetime plot for the field $\eta_3$ for the case 3 solution to the Kaup--Broer system.}
\label{f31}
\end{figure}

\begin{figure}[h]
\center
\includegraphics[width= 0.7\textwidth]{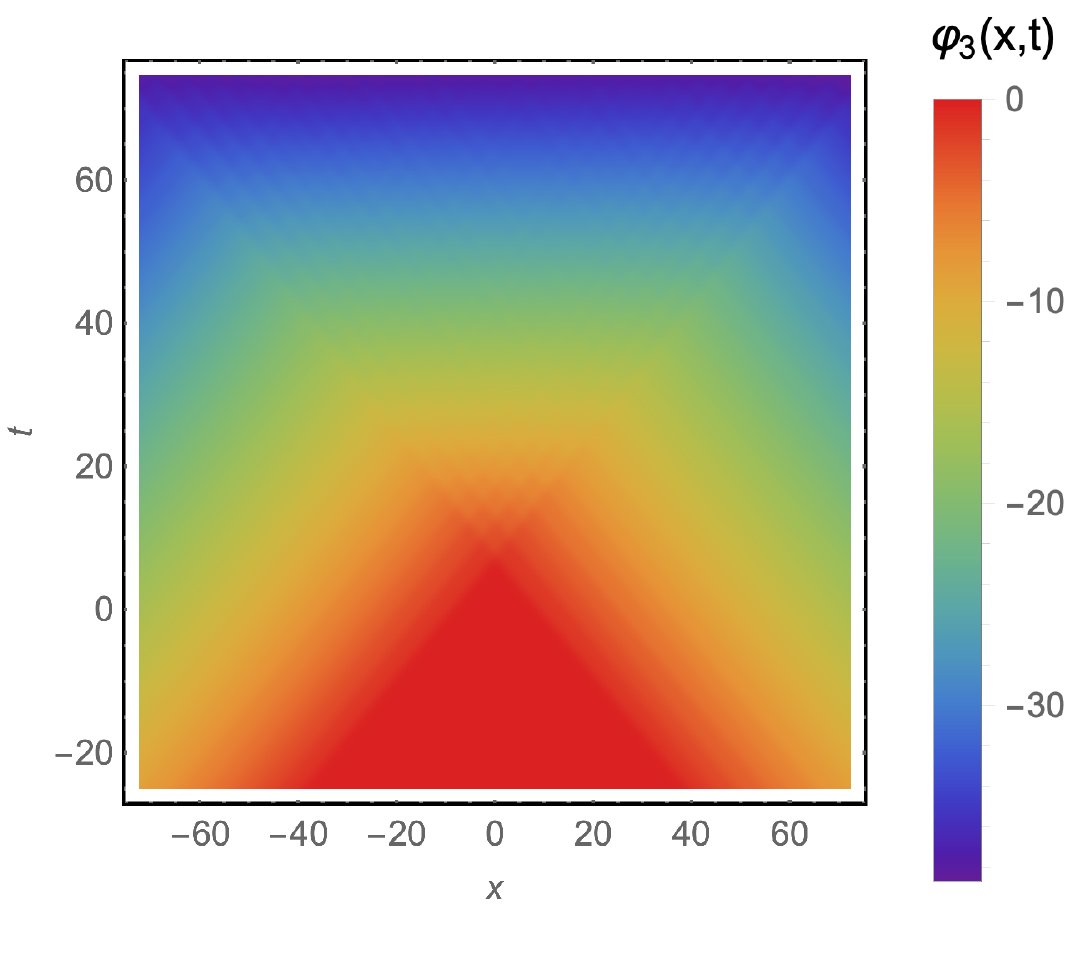}
\caption{A spacetime plot for the field $\varphi_3$ for the case 3 solution to the Kaup--Broer system.}
\label{f32}  
\end{figure}

\begin{figure}[h]
\center
\includegraphics[width= 0.65\textwidth]{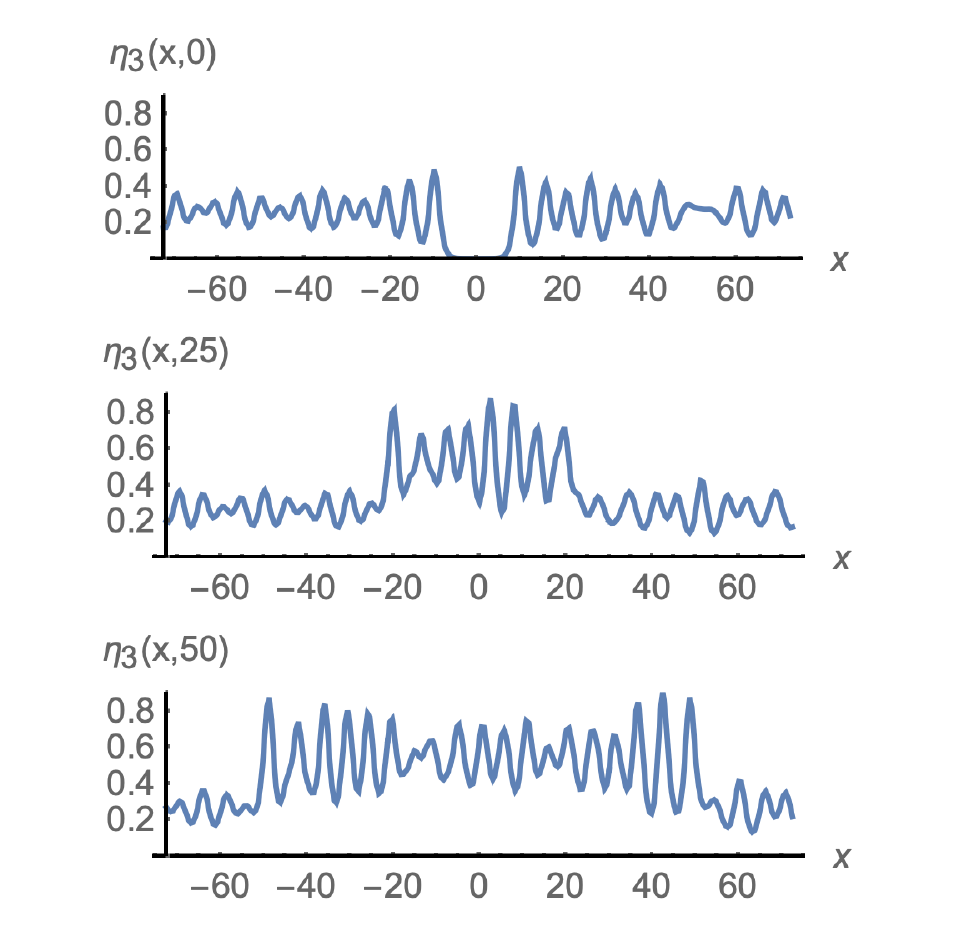}
\caption{Spatial plots for the field $\eta_3$ for the case 3 solution to the Kaup--Broer systems plotted at times $t=0$, $t=25$ and $t = 50$.}
\label{f33} 
\end{figure}

\begin{figure}[h]
\center
\includegraphics[width= 0.65\textwidth]{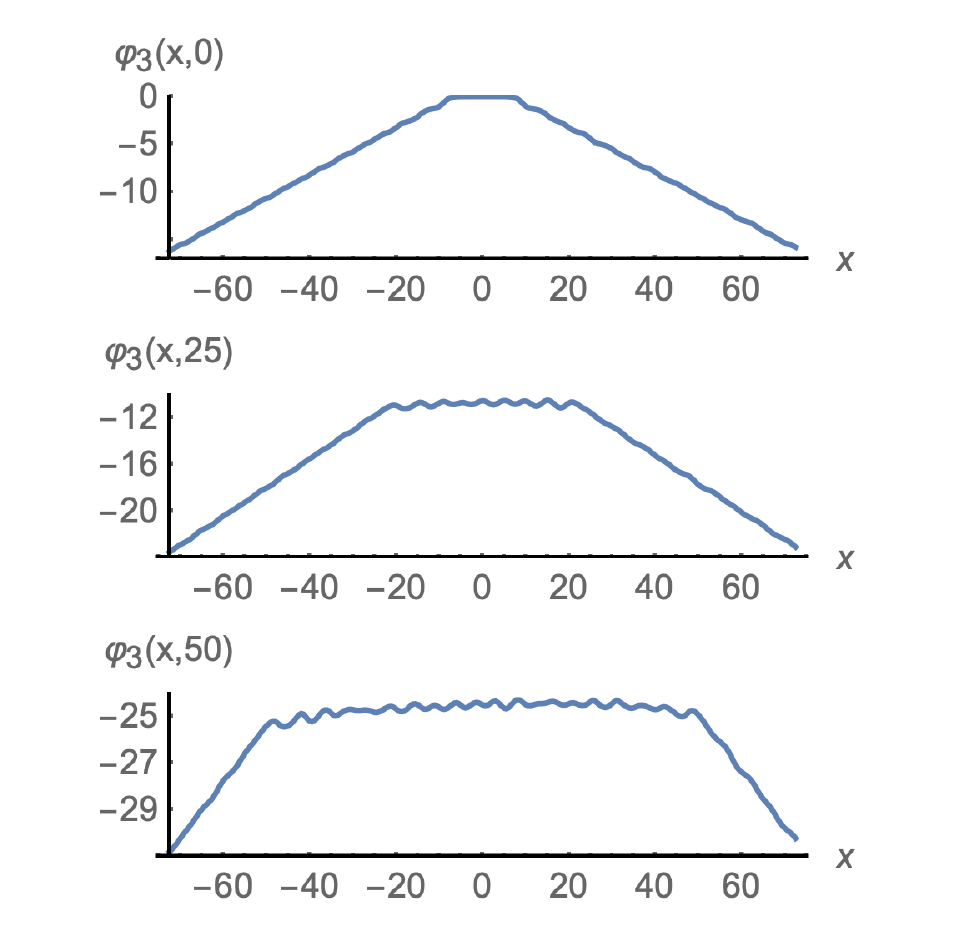}
\caption{Spatial plots for the field $\varphi_3$ for the case 3 solution to the Kaup--Broer systems plotted at times $t=0$, $t=25$ and $t = 50$.}
\label{f34}   
\end{figure}

\clearpage

\section{Conclusion}

In this paper we derived the $\bar \partial$ dressing method for the 4 cases of the Kaup--Broer system as dimension reductions of the complete complexification of a completely integrable 2+1 dimension generalization of the Kaup--Broer system.
We then applied the dressing method to compute examples of interesting solutions to the most common scaling of the Kaup--Broer system describing non-capillary waves with normal gravitation forcing.
However, this method also provides a starting place for the computation of many other solutions. 
A logical next step is the computations of exact real solutions to the other canonical scalings of the Kaup--Broer system.
We have also barely touched on exact solutions to the 2+1 dimensional generalization of the Kaup--Broer system, this gives much room for further research.
In particular, the solutions briefly discussed in subsection \ref{ssec:Nsolgen} should be analyzed with more detail.
In section 6 we made some conjectures on the structure of some numerical solutions related to the finite gap $g$-period solutions of Matveev and Yavor \cite{MY79}.
We believe that an approach similar to that Girotti, Grava and McLaughlin \cite{GGM18} can be applied to the Riemann--Hilbert problem discussed in section 4 of this paper.
The isomorphism of the Kaup--Broer system with a complexified NLS equation \cite{MY79} means that the construction of NLS primitive solutions should be similar.
Finally, as mentioned in the introduction, this method should be well suited to analysis of soliton gasses consisting of many counter propagating solitons.
These topics will be discussed in future papers.

\section{Acknowledgments}

The authors gratefully acknowledge the support of NSF grant DMS-1715323.

\end{document}